\newcommand*{\swap}[2]{\let\temp#1 \let#1#2 \let#2\temp \let\temp\relax}
\swap{\epsilon}{\varepsilon}
\swap{\theta}{\vartheta}
\swap{\phi}{\varphi}
\renewcommand{\Re}{\operatorname{Re}}
\renewcommand{\Im}{\operatorname{Im}}
\DeclareMathOperator{\FF}{DFT}
\DeclareMathOperator{\Arg}{Arg}
\DeclarePairedDelimiter{\card}{\lvert}{\rvert}
\DeclarePairedDelimiter{\vol}{\lvert}{\rvert}
\DeclarePairedDelimiter{\ceil}{\lceil}{\rceil}
\DeclarePairedDelimiter{\floor}{\lfloor}{\rfloor}
\DeclarePairedDelimiter{\paren}{\lparen}{\rparen}
\DeclarePairedDelimiter{\sparen}{[}{]}
\newcommand*{\numberset}{\mathbb}
\newcommand*{\N}{\numberset{N}}
\newcommand*{\Z}{\numberset{Z}}
\newcommand*{\R}{\numberset{R}}
\newcommand*{\C}{\numberset{C}}
\newcommand{\Powset}[1]{\mathcal{P}\paren{#1}}
\newcommand\Powset*[1]{\mathcal{P}\paren*{#1}}
\newcommand*{\im}{i}
\newcommand*{\ims}{i\mkern1mu}
\newcommand*{\KetBra}[2]{\Ket{#1}\!\Bra{#2}}
\newcommand*{\ketbra}[2]{\ket{#1}\!\bra{#2}}
\newcommand*{\dif}[1]{\,\mathrm{d}\mkern0mu#1}
\newcommand*{\od}[3][]{
	\ifinner
	\tfrac{\r{d}{^{#1}}#2}{\r{d}{#3^{#1}}}
	\else
	\dfrac{\r{d}{^{#1}}#2}{\r{d}{#3^{#1}}}
	\fi
}
\newcommand*{\pd}[3][]{
	\ifinner
	\tfrac{\partial{^{#1}}#2}{\partial{#3^{#1}}}
	\else
	\dfrac{\partial{^{#1}}#2}{\partial{#3^{#1}}}
	\fi
}
\renewcommand*{\vec}[1]{\boldsymbol{\mathrm{#1}}}
\newcommand*{\Hilbert}[1][H]{\mathcal{#1}}
\def\H{\Hilbert}
\newcommand*{\Hom}[1]{\mathrm{Hom}\paren{#1}}
\newcommand\Hom*[1]{\mathrm{Hom}\paren*{#1}}
\newcommand*{\End}[1]{\mathrm{End}\paren{#1}}
\newcommand\End*[1]{\mathrm{End}\paren*{#1}}
\newcommand*{\Aut}[1]{\mathrm{Aut}\paren{#1}}
\newcommand\Aut*[1]{\mathrm{Aut}\paren*{#1}}
\newcommand*{\Bnd}[1]{\mathcal{B}\paren{#1}}
\newcommand\Bnd*[1]{\mathcal{B}\paren*{#1}}
\newcommand*{\supp}[1]{\mathrm{supp}\paren{#1}}
\newcommand\supp*[1]{\mathrm{supp}\paren*{#1}}
\newcommand*{\Br}[1][B]{\mathsf{#1}}
\newcommand*{\mvec}[1]{\begin{pmatrix}#1\end{pmatrix}}
\newcommand{\Cay}[2]{\Gamma(#1,#2)}
\def\<{\langle}
\def\>{\rangle}
\def\mat#1{{\boldsymbol{#1}}}
\def\v#1{\boldsymbol{\mathrm #1}} 
\def\Neigh{\mathcal N} 
\def\L2{{\mathcal L}_2}
\def\df#1#2 {\!\!\frac{\mathop{\mathrm{d}#1}}{#2}\,}
\newcommand*{\HI}[2][\bk]{H^{#2}(#1)}
\def\HIW{\HI{W}}
\def\bvec#1{\mathbf{#1}}
\def\bk{\bvec k}
\def\bh{\bvec h}
\def\bn{\bvec n}
\def\bx{\bvec x}
\def\bn{\bvec n}
\def\br{\bvec r}
\def\bv{\bvec v}
\def\bsi{\boldsymbol\sigma}
\def\bga{\boldsymbol\gamma}
\def\TR{\intercal}
\def\EM{\mathcal{E}}
\def\r#1{\mathrm{#1}}
\newcommand{\intentionalspace}{\par}
\let\intentionalspace\relax
\newcommand*{\DefaultAcronymStyle}[1]{{#1}}
\newcommand*{\Acronym}[2][\DefaultAcronymStyle]{\expandafter\def\csname #2\endcsname{#1{#2}\xspace}}
\newcommand\Acronym*[3][\DefaultAcronymStyle]{\expandafter\def\csname #2\endcsname{#1{#3}\xspace}}
 \theoremstyle{mdpi}
 \newcounter{thm}
 \newcounter{ex}
 \newcounter{re}
 \newcounter{def}
 \newtheorem{Proposition}[thm]{Proposition}
 \newtheorem{Definition}[def]{Definition}
 \theoremstyle{mdpidefinition}
 \newtheorem{Remark}[re]{Remark}
\address{%
$^{1}$ \quad QUIT group, Dipartimento di Fisica, via Bassi 6, Pavia 27100, Italy \\
$^{2}$ \quad INFN Gruppo IV, Sezione di Pavia, via Bassi 6, Pavia 27100, Italy; nicola.mosco01@ateneopv.it (N.M.); {paolo.perinotti@unipv.it (P.P.); alessandro.tosini@unipv.it (A.T.)}\\
 }
\abstract{In this paper we consider quantum walks whose evolution
  converges to the Dirac equation in the limit of small
  wave-vectors. We show exact Fast Fourier implementation of the Dirac
  quantum walks in one, two, and three space dimensions. The behaviour
  of particle states---defined as states smoothly peaked in some
  wave-vector eigenstate of the walk---is described by an approximated
  dispersive differential equation that for small wave-vectors gives
  the usual Dirac particle and antiparticle kinematics. The accuracy
  of the approximation is provided in terms of a lower bound on the
  fidelity between the exactly evolved state and the approximated
  one. The jittering of the position operator expectation value for
  states having both a particle and an antiparticle component is
  analytically derived and observed in the numerical implementations.}
\begin{document}



%

\section{Introduction}

Thinking about the discrete evolution of physical systems, the most
natural example is certainly a particle moving on a lattice. A
(classical) \emph{random walk} is exactly the description of a
particle which moves in discrete time steps and with certain
probabilities from one lattice position to the neighboring lattice
positions.  These models have gained increasing attention, showing
several applications in the fields of mathematics, physics, chemistry,
computer science, natural sciences, and economics
\cite{VanKampen:1992a,Weiss:2005a,Cox:1962a}.  A quantum version of
such a random walk---denoted \emph{quantum walk} (\QW)---was first
introduced in \cite{Aharonov:1993aa}, where the motion (right or left)
of a spin-$1/2$ particle is decided by a measurement of the
$z$-component of its spin. Subsequently, the measurement was replaced
by a unitary operator on the internal space, also known as \emph{coin}
space \cite{Aharonov:2001aa,Ambainis:2001aa}, determining the
evolution of the internal degree of freedom of the system.  This
model, known as coined quantum walk, has been proven to provide a
computational speedup over classical random walks for a class of
problems---such as some oracular problems, element distinctness
problem, {and the} triangle finding problem. The Grover's search algorithm can
also be implemented as a \QW
\cite{Childs:2003aa,Ambainis:2007aa,Magniez:2007aa,Farhi:2007aa,Santha:2008a,portugal2013quantum,Wong:2015a}.
The rigorous definition of \QW can be found in
Refs.~\cite{Ambainis:2001aa,Nayak:2000aa} for the one-dimensional
case, and in \cite{Aharonov:2001aa} for \QW{s} on graphs of any
dimension (see also \cite{Kempe:2003aa} for a complete review
including walks with continuous time evolution not considered in the
present context).

Aside from the interest in quantum algorithms, \QW{s} provide a fully quantum
model of evolution for a system with an internal degree of freedom. As such, \QW{s} have been considered as discrete quantum {simulators} 
 for
particle-physics. Interestingly, it has been proven that \QW{s} have the
capability of simulating free relativistic particle dynamics
\cite{Succi:1993aa,Bialynicki-Birula:1994ab,Meyer:1996aa,
  Strauch:2006aa,Yepez:2001ab,DAriano:2012af,Bisio:2015aa,
  Bisio:2013ab,Arrighi:2014aa,Arrighi:2013ab,Farrelly:2014ab,Farrelly:2014ac,Katori:2005aa,
PhysRevA.75.022322},
providing---in contrast with other discretisation schemes based on
finite-differences and which in general do not preserve the norm---a
local unitary model underlying relativistic dynamics. 

In the light of this success, in Ref.~\cite{DAriano:2014ae} the
authors propose a discrete theory for quantum field dynamics based on
finite dimensional quantum systems in interaction. Assuming the
locality, homogeneity, and unitarity of the interaction, it follows that
the systems must evolve according to a \QW. Moreover, the above
assumptions are very restrictive and the only \QW{s} admissible
on the cubic lattice in one, two, and three dimensions are proved to
recover the usual relativistic Weyl equation in the limit of
small wave-vectors. The massive case is obtained coupling two massless
\QW{s}, and in other works also the Maxwell equation \cite{Bisio2016} for
Bosonic fields is proved to be compatible with an elementary \QW model.
Finally, the Lorentz covariance, which is broken by the discreteness
of the walk, can be recovered as an approximated symmetry
\cite{bibeau2015doubly} in the relativistic limit. These results show
how \QW{s} not only provide a useful way of simulating relativistic free
evolution ,but also can be considered as a fundamental approach to
quantum field theory (see Refs.~\cite{Bisio2015-1,Bisio2015-2} for a
review).

Here we consider the Dirac \QW{s} derived in Ref.~\cite{DAriano:2014ae}
and present both an analytical and a numerical study of their
kinematics, recovering the characteristic traits of the usual Dirac
equation.

We show “smooth-states” peaked around some wave-vector eigenstate of
the \QW can be considered as particle states. We present an analytical
approximation of particle states evolution deriving a
wave-vector-dependent differential equation for the walk evolution.
Then we analyse in {detail}
 dynamical quantities {such} as the walk
position and velocity operators and study their evolution. An
intrinsic relativistic quantum processes of the Dirac field, denoted
\emph{Zitterbewegung}, first considered by Schr\"odinger
\cite{Schrodinger:1930aa} and corresponding to a jittering of the mean
position for a relativistic particle, is recovered from the \QW
evolution. The theoretical existence of the quivering motion has been
evidenced by numerical simulations of the Dirac equation and of
quantum field theory. While \emph{Zitterbewegung} oscillations cannot be
directly observed by current experimental techniques for a Dirac
electron since the amplitude should by very small (equal to the
Compton wavelength $\hbar/mc$ with $m$ the rest mass of the
relativistic particle, namely $\approx 10^{-12}$ m for an electron),
solid state and atomic physics provide physical hardware to simulate
the phenomenon
\cite{cannata1991effects,ferrari1990nonrelativistic,cannata1990dirac,lurie1970zitterbewegung,PhysRevLett.95.187203,PhysRevLett.99.076603,
  lamata2007dirac,gerritsma2010quantum,cserti2006unified,rusin2007transient,schliemann2005zitterbewegung,
  zawadzki2005zitterbewegung,zawadzki2010nature,geim2007rise,zawadzki2011zitterbewegung,vaishnav2008obserVing,PhysRevLett.100.113903},
and they have recently boosted a renewed interest in the Dirac equation
features.

\section{Quantum Walks}

A quantum walk is a \emph{local} unitary evolution of a quantum system with
Hilbert space \mbox{$\H = \ell^2(V) \otimes \C^s$,} where $V$ is a countable
set and $\C^s$ is called \emph{coin space}---namely, the internal degree
of freedom of the walker, $s>0$ integer.  

A \QW on $\H$ is defined by assigning a mapping
$\mathcal{E}\colon V \times V \to \mathcal{M}_s(\C)$, such that
$\mathcal{E}(x,y) \coloneqq U_{y,x}$, which associates to each pair of
vertices a matrix, called \emph{transition matrix}, acting on the coin
space.  Then $\psi \colon \N \to \H$ is a solution of the QW
$(V, \mathcal{E})$ if it satisfies the following update rule for a
given initial condition $\psi(0) \in \H$:
\begin{equation} \label{eq:upd-rule}
	\psi(x,t+1) = \sum_{y\in V} \mathcal{E}(y,x) \psi(y,t), 
		\quad \forall\mkern1mu x \in V, \, \forall\mkern1mu t \in \N,
\end{equation}
where $\psi(x,t)\in\C^{s}$ is the \QW  wave-function. 
Since the \QW evolution is unitary, the transition matrices should satisfy the following conditions for all $x,y \in V$:
\begin{align}\label{eq:un-cond}
\sum_{z \in V} \EM(z,x)\EM(z,y)^\dagger = 
\sum_{z \in V} \EM(x,z)^\dagger\EM(y,z) =
\delta_{xy} I_s,
\end{align}
where $I_s$ is the identity on the coin space $\C^s$.
Such a \QW carries an
associated graph defined by the set of non-null transition matrices as
the directed graph $\Gamma = (V,E)$ with vertex set $V$ and edge set
$E \coloneqq \set{(x,y) \in V \times V | A_{y,x}
  \neq 0}$. The \emph{locality} condition amounts to requiring that for every vertex $x \in V$, the cardinality of its out-neighbourhood $\Neigh_x^{+}=\set{y\in V|U_{y,x}\neq 0}$, and in-neighbourhood  $\Neigh_x^{-}=\set{y\in V|U_{x,y}\neq 0}$, is uniformly bounded over $V$---namely $\card{\Neigh_x^{\pm}} < M < +\infty$ for every $x \in V$.

In Ref.~\cite{DAriano:2014ae} it has been shown that assuming the \QW \emph{homogeneous}
(the vertices of the graph cannot be distinguished by the walk
dynamics) the graph $\Gamma$ is actually a Cayley graph of a group
$G$.  Given a group $G$ and taking $S \subseteq G$, the Cayley graph
$\Cay{G}{S}$ of $G$ with connection set $S$ is defined as the coloured
directed graph $(G, S, E)$ with vertex set $G$, edge set
$E \coloneqq \set{(g,gh) | g \in G, \, h \in S}$ and colouring given
by $E \ni (g,g') \mapsto g^{-1}g' \in S$.  We will assume hereafter
that the connection set $S$ is a generating set for $G$---which
entails that the Cayley graph $\Cay{G}{S}$ is unilaterally
connected---and it is symmetric. Namely, $S = S^{-1}$.
The walk unitary operator corresponding to the update rule of \mbox{Equation (\ref{eq:upd-rule})} can be expressed in terms of the right-regular representation of $G$ on $\ell^2(G)$ defined as the map $G \ni g \mapsto T_g \in \Aut{\ell^2(G)}$ such that $T_g \ket{g'} = \ket{g'g^{-1}}$.
Assuming, by homogeneity, that we can choose the transition matrices independently of the vertex so that $U_{g,gh} \equiv U_{h}$ for every $g \in G$ and $h \in S$, we can write the walk operator $U \in \Aut{\H}$ as
\intentionalspace
\begin{equation} \label{eq:walk-op}
	U = \sum_{h \in S} T_h \otimes U_h.
\end{equation}

Now the unitarity conditions on $U$ translate into the following conditions of the transition \mbox{matrices $U_h$:}
\begin{align}\label{eq:unitarity}
\sum_{h\in S}U^\dag_h U_h=\sum_{h\in S}U_h U^\dag_h=I_s,\quad
\sum_{\shortstack{$\scriptstyle h,h'\in S$\\ $\scriptstyle
		h^{-1}h'=h''$}} U^\dag_h U_{h'}=\sum_{\shortstack{$\scriptstyle
		h,h'\in S$\\ $\scriptstyle h'h^{-1}=h''$}} U_{h'} U^\dag_{h}=0.
\end{align}

\subsection{Fourier Representation of Abelian \QW{s}}

As pointed out by Ambainis \emph{et~al.} \cite{Ambainis:2001aa}, there
are two general ways to study the evolution of a \QW.  On the one hand,
one can exploit the algebraic properties of the walk transition
matrixes to obtain a path-sum solution, where the \QW transition
amplitude to a given site is expressed as a combinatorial sum over all
the paths leading to that site.  Regarding this approach, in
Ref. \cite{Ambainis:2001aa} the authors provided a solution for the
\emph{Hadamard Walk}, whereas Konno derived the solution for an
arbitrary coined \QW \cite{Konno:2002ab}.  Considering the application
of \QW{s} to the description of relativistic particles, also the Dirac \QW
in $1+1$-dimensions and the massless Dirac \QW in $2+1$-dimensions have
been analytically solved in position space
\cite{DAriano:2014ad,DAriano:2015aa}.  On the other hand, when a
\QW is defined on the Cayley graph of an Abelian group, the walk
dynamics can be studied in its Fourier representation, providing
analytical solutions and also approximate asymptotic solutions in the
long-time limit.

Let us now consider \QW{s} defined on Cayley graphs of free Abelian
groups. That is, $G \cong \Z^d$ with generating set $S$.  Adopting the
usual additive notation for the group operation on $\Z^d$, the
right-regular representation of $\Z^d$ is expressed as
\intentionalspace
\begin{equation}
	T_{\vec y} \ket{\vec x} = \ket{\vec x - \vec y}, 
		\qquad \forall\mkern1mu \vec{x}, \vec y \in \Z^d.
\end{equation}

Moreover it decomposes in one-dimensional irreducible representations, as can be easily seen 
from the fact that the translations $T_{\vec x}$ are diagonal on the plane waves
\intentionalspace
\begin{equation}
    \ket{\vec k} \coloneqq \frac{1}{\sqrt{\vol{\mathsf{B}}}}
        \sum_{\vec x \in \Z^d} e^{-\ims \vec k \cdot \vec x} \ket{\vec x}, \quad
    T_{\vec x} \ket{\vec k} = e^{-\ims \vec k \cdot \vec x} \ket{\vec k},
\end{equation}
\intentionalspace
where $\Br$ denotes the \emph{first Brillouin zone}, which depends on the specific Cayley graph $\Cay{\Z^d}{S}$ we \mbox{are considering.}

Therefore, we can write the walk operator of Equation (\ref{eq:walk-op}) in the direct integral decomposition
\intentionalspace
\begin{equation} \label{eq:walk-K}
    U = \int_{\Br}^{\oplus} \KetBra{\vec k}{\vec k} \otimes U_{\vec k} \dif{\vec k}, \qquad
    U_{\vec k} \coloneqq \sum_{\vec h \in S} e^{-\ims \vec k \cdot \vec h} U_{\vec h}.
\end{equation}
\intentionalspace

For each $\vec k$ we can diagonalise the matrix $U_{\vec k}$ obtaining
\intentionalspace
\begin{equation} \label{eq:eigen-walk}
    U_{\vec k} \ket{u_r(\vec k)} = e^{-\ims \omega_r(\bk)} \ket{u_r(\vec k)},
\end{equation}
\intentionalspace
where $\omega_r(\bk)$ is the \emph{dispersion relation} of the walk
and $\ket{u_r(\vec k)} \in \C^s$ is the eigenvector of $U_{\vec k}$
corresponding to the eigenvalue $e^{-\ims \omega_r(\bk)}$, with
$r = 1,\dots,s$. We notice that we have considered the representation
given by the factorized orthonormal basis $\ket{\vec{x}}\otimes \ket{r}$ for
the walk Hilbert space $\ell_2(\mathbb{Z}^d)\otimes \mathbb{C}^s$,
with $\{\ket{r}\}_{r=1}^{s}$ the canonical basis in
$\mathbb{C}^s$.

\section{The Dirac \QW in One, Two, and Three Space Dimensions}\label{s:dirac-qw}

In This Section, we present the Dirac \QW{s} in one, two, and three space
dimensions derived in Ref.~\cite{DAriano:2014ae}. We will see that in
the limit of small wave-vectors, the Dirac walks simulate the usual
Dirac equation evolution. We start from the simplest case of massless
Dirac \QW, also denoted Weyl \QW. The massive walk will be given by
coupling two Weyl \QW{s} with the coupling parameter interpreted \emph{a
posteriori} as the mass of the Dirac field.

\subsection{The Weyl Quantum Walk}\label{s:weyl}

In Ref.~\cite{DAriano:2014ae}, the authors derive the unique \QW{s} on
Cayley graphs of $\Z^d$ for $d = 1,2,3$ satisfying---besides locality
and unitarity---the assumptions of homogeneity and discrete isotropy,
and with minimal dimension $s$ of the coin space to have non-identical
evolution.  As first noticed by Meyer~\cite{Meyer:1996aa}, the only
solution for \emph{scalar} \QW{s} on Cayley graphs of free-Abelian
groups is the identical \QW; in order to have non-trivial dynamics, one
has to take at least $s=2$.

Let us start from Cayley graphs of $\Z^3$, the most relevant from the physical perspective.
It can be proved (see~\cite{DAriano:2014ae}) that only the body-centred cubic lattice (\BCC) allows one to define a \QW satisfying the above assumptions.
The \BCC lattice is the Cayley graph $\Cay{\Z^3}{S_+ \cup S_-}$ where $S_+ = \{\bh_1, \bh_2, \bh_3, \bh_4\}$ is the set of generators of the group and $S_-$ is the corresponding set of their inverses; a convenient choice for the generators is the following:
\intentionalspace
%
\begin{equation} \label{eq:bcc-vecs}
\vec{h}_{1} = \mvec{ 1\\ 1\\ 1}, \quad
\vec{h}_{2} = \mvec{ 1\\-1\\-1}, \quad
\vec{h}_{3} = \mvec{-1\\ 1\\-1}, \quad 
\vec{h}_{4} = \mvec{-1\\-1\\ 1}.
\end{equation}
\intentionalspace

The first Brillouin zone $\Br$ of the BCC lattice is defined in Cartesian coordinates as $-\pi \leq k_i\pm k_j\leq \pi, \, i \neq j, \, i,j \in\{x,y,z\}$ and it is depicted in Figure \ref{fig:bcc-br}.

\begin{figure}
    \centering
    \includegraphics[width=0.3\textwidth]{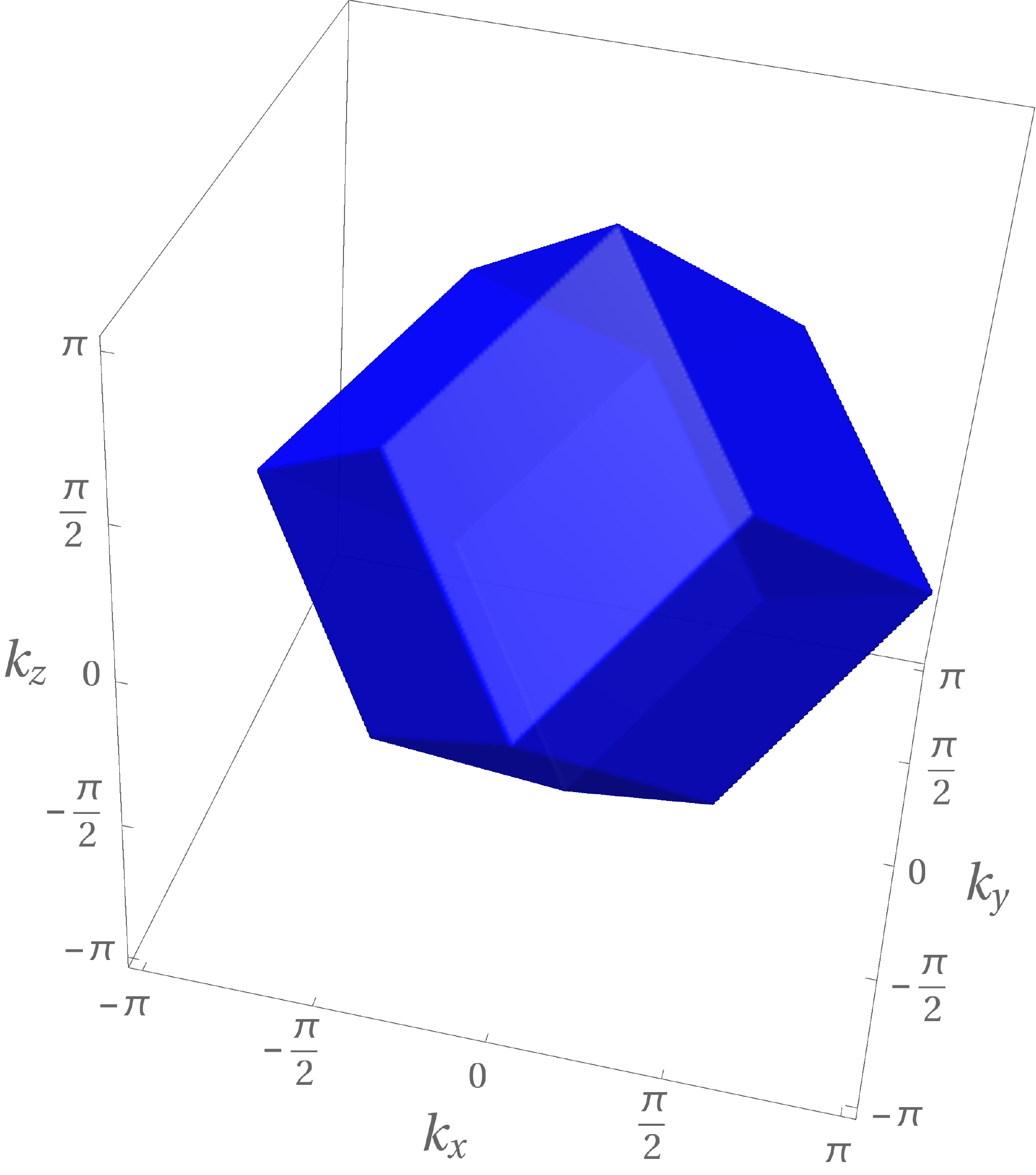}
    \caption{First Brillouin zone of the \BCC lattice (rhombic dodecahedron).}
    \label{fig:bcc-br}
\end{figure}

The unique solutions on the \BCC lattice can be summarised as:
\intentionalspace
\begin{equation}\label{eq:weyl3d}
\begin{aligned}
    &  U_{\bk} = u_\bk I - \ims \bsi \cdot \tilde\bn_\bk, \qquad
    && u_{\bk} \coloneqq c_x c_y c_z + s_x s_y s_z, \\
    &  \tilde\bn_{\bk} \coloneqq
        \mvec{s_x c_y c_z - c_x s_y s_z \\
              c_x s_y c_z + s_x c_y s_z \\
              c_x c_y s_z - s_x s_y c_z}, \qquad
    && c_i \coloneqq \cos k_i, \quad s_i \coloneqq \sin k_i.
\end{aligned}
\end{equation}
where $\mat{\sigma}$ is the vector with components given by the Pauli
matrices $\sigma_x$, $\sigma_y$ and $\sigma_z$. The walk matrix $U_\bk$
has spectrum $\{e^{-\ims\omega_\bk}, \, e^{\ims\omega_\bk}\}$ with
dispersion relation $\omega_\bk = \arccos u_\bk$ and group velocity
$\bv_\bk \coloneqq \nabla_\bk\omega_\bk$, representing the speed of a
wave-packet peaked around the central wave-vector $\bk$.

Let us consider now $d=2$; also in this case our assumptions single
out only one Cayley graph of $\Z^2$, the square lattice, involving two
generators $S_+ = \{\bh_1,\bh_2\}$, with $\bh_1 = (1,0)$ and
$\bh_2 = (0, 1)$; the first Brillouin zone $\Br$ in this case is given
by $-\pi \leq k_i \leq \pi, \, i \in\{x,y\}$, where $k_x = k_1 + k_2$
and $k_y = k_1 - k_2$.  The unitary matrix of the walk in Fourier
representation is given by:
\intentionalspace
\begin{equation}
\begin{aligned}\label{eq:weyl2d}
    &  U_{\bk}   = u_\bk I - \ims \bsi \cdot \tilde\bn_\bk, \qquad
    && u_{\bk}   \coloneqq c_x c_y, \\
    &  \tilde\bn_{\bk} \coloneqq
        \mvec{s_x c_y \\
              c_x s_y \\
              s_x s_y}, \qquad
    && c_i \coloneqq \cos k_i, \quad s_i \coloneqq \sin k_i,
\end{aligned}
\end{equation}
\intentionalspace
with dispersion relation $\omega_\bk = \arccos u_\bk$.

Finally, for $d=1$, the unique Cayley graph satisfying our requirements for $\Z$ is the lattice $\mathbb Z$ itself, considered as the
free Abelian group on one generator $S_+=\{h\}$.
From the unitarity conditions one gets the unique solution
\intentionalspace
\begin{equation}\label{eq:weyl1d}
\begin{aligned}
    U_k = u_k I - \ims \bsi \cdot \tilde\bn_k, \qquad
    u_{k} \coloneqq  \cos k, \quad
    \tilde\bn_{k} \coloneqq \mvec{0\\0\\\sin k},
\end{aligned}
\end{equation}
\intentionalspace
with dispersion relation $\omega_k = k$.

From \Cref{eq:weyl3d,eq:weyl2d,eq:weyl1d}, we see that the Weyl \QW in
dimension $d\leq 3$ is of the form
\intentionalspace
\begin{align} \label{eq:walk-gen}
    W_\bk = u_\bk I - \ims \bsi \cdot \tilde\bn_\bk,
\end{align}
\intentionalspace
for certain $u_\bk$ and $\bn_\bk$ with dispersion relation
\intentionalspace
\begin{align}
    \omega_\bk = \arccos{u_\bk}.
\end{align}
\intentionalspace

Now it is easy to show that the evolution of the walks
\Cref{eq:weyl3d,eq:weyl2d,eq:weyl1d} obeys Weyl's equation in the
limit of small wave-vectors, and thus we call them Weyl \QW{s}. Let us
introduce the interpolating Hamiltonian $\HIW$ defined in the
wave-vector space as the matrix such that $W_\bk = e^{-\ims \HIW}$ and
governing the continuous-time evolution, interpolating exactly the
discrete dynamics of the walk. \mbox{As one} can check, the interpolating
Hamiltonian is
\intentionalspace
\begin{align}\label{eq:weyl-interpolating}
    \HIW = \omega_\bk \bsi \cdot \bn_\bk, \qquad 
    \bn_\bk \coloneqq
        \frac{1}{\sin\omega_\bk} \tilde{\bn}_{\bk},
\end{align}
\intentionalspace
and by power expanding at the first order in $\bk$, one has
\intentionalspace
\begin{align}
    \HIW = \bsi \cdot \bk + \mathcal O(|\bk|^2)
\end{align}
\intentionalspace
whose first order term $\bsi \cdot \bk$ coincides with the usual Weyl
Hamiltonian in $d$ dimensions, with $d=1,2,3$, once the wave-vector $\bk$
is interpreted as the momentum.

It will be useful for the considerations of the following sections to
consider the eigenvectors of the \QW.  Since the structure of the
matrix is independent of the dimension, we will give here the general
expression of the eigenvectors.  Let us now rewrite the unitary matrix
$W_\bk$ as:
\intentionalspace
\begin{align}
    W_\bk = 
    \begin{pmatrix}
        z_\bk & -w^*_\bk \\
        w_\bk & z^*_\bk
    \end{pmatrix},
\end{align}
\intentionalspace
where $z_\bk$ and $w_\bk$ are related to the functions in Equation (\ref{eq:walk-gen}) by the equations $\Re(z_\bk) = u_\bk$ and $\tilde{\bn}_\bk = (-\Im(w_\bk), \Re(w_\bk), -\Im(z_\bk))$.
We can then solve the eigenvalue problem
\intentionalspace
\begin{align} \label{eq:eigen-W}
    W_\bk \ket{u_s^W(\bk)} = e^{-\ims s\omega_\bk} \ket{u_s^W(\bk)},
\end{align}
\intentionalspace
obtaining the following expression for the eigenvectors $\ket{u_s^W(\bk)}$, with $s=\pm$,
\intentionalspace
\begin{align} \label{eq:evecs-W}
    \ket{u_s^W(\bk)} = \frac{1}{\sqrt{2}}
    \mvec{\sqrt{1 - s v_\bk^W} \\ -s e^{\im\phi} \sqrt{1 + s v_\bk^W}}, \qquad
    v^W_\bk = \frac{\Im(z_\bk)}{\sqrt{1-u_\bk^2}},
\end{align}
\intentionalspace
where $\phi= \Arg w_\bk - \frac{\pi}{2}$.

\subsection{The Massive Case}

Now we present \QW{s} which manifest Dirac dynamics. We consider a walk
resulting from the local coupling of two Weyl \QW{s}.  One can show
\cite{DAriano:2014ae} that there is only one possible local coupling
of two Weyl \QW{s}, and that in the small wave-vector limit the resulting
walks approximate the Dirac's equation.  The unique local coupling of
Weyl's \QW{s}, modulo unitary conjugation, is of the form
\intentionalspace
\begin{equation}
    D_\bk =
    \begin{pmatrix}
        n W_\bk & \im m \\
        \im m   & n W_\bk^\dagger
    \end{pmatrix}, \quad
    n,m \in \R^+, \quad n^2 + m^2 = 1.
\end{equation}
\intentionalspace

We can provide a convenient expression of the walk in terms of the gamma matrices in spinorial~representation:
\intentionalspace
\begin{align}\label{eq:dirac}
    D_\bk = n u_\bk I - \im
        n \gamma_0 \bga \cdot \tilde\bn_\bk + \im m \gamma_0,
\end{align}
\intentionalspace
where $u_\bk$ and $\tilde\bn_\bk$ are those given previously for the Weyl's \QW{s}.
From Equation (\ref{eq:dirac}) we can see that the dispersion relation in this case is simply given by
\intentionalspace
\begin{equation}
    \omega_\bk = \arccos\paren*{\!\sqrt{1-m^2} \, u_\bk}.
\end{equation}

In this case, the interpolating Hamiltonian $\HI{D}$ has the form
\intentionalspace
\begin{align}\label{eq:dirac-interpolating}
  \HI{D} =
  \frac{\omega_\bk}{\sin\omega_\bk}
  (n \gamma_0 \bga \cdot \tilde\bn_\bk - m \gamma_0),
\end{align}
\intentionalspace
and to the first order in $\bk$ and $m$ one obtains the usual Dirac's
Hamiltonian
\intentionalspace
\begin{align}\label{eq:dirac-interp}
    \HI{D} = \gamma_0 \bga \cdot \bk + m \gamma_0 + \mathcal O(m^2) + \mathcal O(|\bk|^2).
\end{align}
\intentionalspace

We notice that the Dirac QW walk not only provides the usual Dirac
dispersion relation in the limit small wave-vectors, but also the
correct spinorial dynamics of the Dirac equation.

It is worth noticing that in dimension $d=1$, the Dirac \QW decouples
into two identical $s=2$ massive \QW{s}
\cite{DAriano:2014ae,Bisio:2015aa}, written explicitly as
\intentionalspace
\begin{align}
    D_k = 
    \begin{pmatrix}
        n e^{-\im k} & \im m \\
        \im m         & n e^{\im k}
    \end{pmatrix}, \qquad
    \omega_k = \arccos\paren*{n \cos k},
\end{align}
\intentionalspace
where $n,m \in \R^+$, $n^2 + m^2 = 1$.  In one space dimensions,
similar QW have been studied in the literature. For example in
Refs.~\cite{Strauch:2006aa,Katori:2005aa}, the authors consider the
relation between arbitrary coined QWs and relativistic dynamics.

For the massive \QW{s}, the eigenvalue equation takes the form:
\intentionalspace
\begin{align} \label{eq:eigen-D}
    D_\bk \ket{u_{s,p}^D(\bk)} = e^{-\ims s\omega_\bk} \ket{u_{s,p}^D(\bk)},
\end{align}
\intentionalspace
and the four eigenvectors $\ket{u_{s,p}^D(\bk)}$, with $s,p = \pm$ can be written as
\intentionalspace
\begin{align} \label{eq:evecs-D}
    \ket{u_{s,p}^D(\bk)} = \frac{1}{2}
    \mvec{\sqrt{(1 - p v_\bk^W)(1 + sp v_\bk^D)} \\
          -p e^{\im\phi} \sqrt{(1 + p v_\bk^W)(1 + sp v_\bk^D)} \\
          -s \sqrt{(1 - p v_\bk^W)(1 - sp v_\bk^D)} \\
          sp e^{\im\phi} \sqrt{(1 + p v_\bk^W)(1 - sp v_\bk^D)}}, \qquad
    v^D_\bk = \frac{n \sqrt{1-u_\bk^2}}{\sqrt{1-n^2 u_\bk^2}},
\end{align}
\intentionalspace
with $\phi$, $u_\bk$, and $v_\bk^W$ defined for the corresponding massless \QW of Equations (\ref{eq:eigen-W}) and (\ref{eq:evecs-W}).

\section{Numerical Simulation of the Weyl and Dirac \QW{s}}\label{s:numerical}

In order to evaluate numerically the evolution of \QW{s}, one can adopt two different approaches. 
On the one hand, one can exploit the update rule in position space given by Equation \eqref{eq:upd-rule}, which is straightforward to implement numerically. 
This approach, however, is not very efficient if we only want to know the evolved state at some specific time $t$, since it would require $t$ successive updates of the state.
On the other hand, the Fourier representation of the walk allows one to {directly compute} 
 the evolution at a specific time---the complexity of the computation being that of the Fourier Transform, which can be efficiently implemented via a Fast Fourier Transform (\FFT) algorithm such as the Cooley--Tukey \FFT algorithm \cite{Cooley:1965a}.

Recalling the general expressions in Equations (\ref{eq:walk-K}) and (\ref{eq:eigen-walk}), the evolution of a state $\ket{\psi(0)} \in \H=\ell^2(\Z^d)\otimes\C^s$ is given by the subsequent application of the walk unitary $\ket{\psi(t)} = U^t \ket{\psi(0)}$.
Therefore, the state at time $t$ can be expressed in terms of its representation in Fourier space as the Fourier~Transform 
\intentionalspace
\begin{equation}
    \ket{\psi(\bx,t)} = \frac{1}{(2\pi)^{d/2}}
    \sum_{r=1}^s \int_{\Br} e^{-\ims \bk\cdot\bx} e^{-\ims \omega_r(\vec k) t} 
        \hat{\psi}_r(\vec k) \ket{u_r(\vec k)} \dif{\vec k},\qquad  \ket{\psi(\bx,t)}\in\C^{s},\; \bx\in\Z^d,
\end{equation}
\intentionalspace
where
\(
    \hat{\psi}_r(\vec k) = \sum_{\vec x \in G} 
        \braket{u_r(\vec k) | \psi (\bx,0)} e^{\ims \vec k \cdot \vec x}
\) is the $r$-component in the eigenbasis of the walk of the discrete-time Fourier transform of $\psi(\bx,0)$.
The notation of the eigenbasis refers here to that of \mbox{Equation (\ref{eq:eigen-walk}),} where $r$ runs over $\{1,\dots,s\}$ and $s$ is the dimension of the coin.
Now, the numerical data used to represent the state in Fourier space constitute a discrete sampling of it, say at frequencies $\frac{2\pi}{N_i} k_i$ with $k_i = -\floor{N_i/2},\dots,\ceil{N_i/2} - 1$ and $N_i$ the total number of samples in dimension $i$.
Implementing periodic boundary conditions, this amounts to {taking} 
 samples in direct space over a finite region, extending the data periodically to the whole lattice.

Let us consider now the simple cubic lattice of $\Z^d$.
Let us consider a restriction $f\colon \Z^d \to \C$ of $\phi \in \ell^2(\Z^d)$ to a finite region $\mathcal{N} = \set{\v{m} \in \Z^d | 0 \leq m_i < N_i, \, i = 1,\dots,d}$, with $N_i \in \N$, such that $f|_{\mathcal{N}} = \phi|_{\mathcal{N}}$.
Then, $f|_{\mathcal{N}}$ is periodically extended to $\Z^d$; namely 
$f_{\bn + \v{N} \br} = f_\bn, \, \forall\mkern2mu \bn,\br \in \Z^d$, with periodicity matrix given by $\v{N} = \mathrm{diag}(N_1,\dots,N_d)$.
The Fourier Transform $\mathcal{F}$ of the sequence $f_\bn$ coincides with the Discrete Fourier Transform (\DFT) defined as 
\intentionalspace
\begin{equation} \label{eq:dft}
    \hat f_\bk = \FF(f)(\bk) \coloneqq \frac{1}{\sqrt{N}} 
        \sum_{\bn \in \mathcal{N}} 
        f_\bn e^{-2\pi\ims \v p^\TR \v{N}^{-1} \bn} 
        e^{2\pi\ims \bk^\TR \v{N}^{-1} \bn}, \quad
        \bk \in \mathcal{N},
\end{equation}
\intentionalspace
where $N = \card{\mathcal{N}} = \mathrm{det}(\v{N})$ and 
$p_i = \floor*{\frac{N_i}{2}}$.
The inversion formula is then given by:
\intentionalspace
\begin{equation} \label{eq:idft}
    f_\bn = \FF^{-1}(\hat f)(\bn) = e^{2\pi\ims \v{p}^\TR \v{N}^{-1} \bn} \frac{1}{\sqrt{N}} 
        \sum_{\bk \in \mathcal{N}} 
        \hat f_{\bk} e^{-2\pi\ims \bk^\TR \v{N}^{-1} \bn}.
\end{equation}
\intentionalspace

Here we have chosen the set of Fourier indices $\bk$ so that the frequencies actually computed lie in the interval $[-\pi,\pi]$.

For the Dirac \QW in $3+1$-dimensions, we have to consider instead the \BCC lattice.
One can show~\cite{Alim:2009aa} that it is possible to reduce the \DFT on the \BCC lattice to two rectangular \DFT{s}, allowing {implementation of} 
 the \DFT via usual rectangular \FFT algorithms.
We can describe the \BCC lattice choosing as vertex set $G = 2\Z^3 \cup (2\Z^3+\v{t})$, where $\v{t} = (1,1,1)$.
A suitable truncation of a sequence $\phi_\bn$, $\bn \in G$ to a function $f_\bn$ defined on a finite set $B \subset G$ can be obtained choosing the fundamental region $B = 2\mathcal{N} \cup (2\mathcal{N}+\v{t})$ and periodically extending it to $G$.
The original sequence $f_\bn$ can be further split into two subsequences on the even and odd indices $f^0_\bn = f_{2\bn}$ and $f^1_{\bn} = f_{2\bn + \v{t}}$, for all $\bn \in \mathcal{N}$.
As a consequence, these two sequences $f^0_\bn$ and $f^1_\bn$ are periodic with periodicity matrix $\v{N}$:
$f^j_{\bn+\v{N}\v{r}} = f^j_{\bn}$ for all $\bn$ and $\v{r}$ in $\Z^3$ and $j=0,1$.
The Fourier Transform of $f_\bn$ is defined as usual as
\intentionalspace
\begin{align}
    \hat f_\bk = \mathcal{F}(f)(\bk) \coloneqq 
        \frac{1}{2\sqrt{\card{B}}} 
        \sum_{\bn \in B} f_\bn e^{2\pi\ims \bk^\TR (2\v{N})^{-1} \bn},
        \qquad
        \forall\mkern1mu\bk \in \mathcal{K},
\end{align}
\intentionalspace
where the set of Fourier indices can be chosen as 
$\mathcal{K} = \Set{\bk \in \Z^3 | -N_i \leq k_i < N_i, \, i=1,2,3}$.
\mbox{As shown} in Ref.~\cite{Alim:2009aa}, one can exploit the geometry of the \BCC lattice to reduce the \DFT $\hat f_\bk$ with $\bk \in \mathcal{K}$ to two functions $\hat f^0_\bk$ and $\hat f^1_\bk$ with $\bk$ restricted now to $\mathcal{N}$.
This allows for the computation of the \DFT in terms of the usual rectangular \DFT{s}:
\intentionalspace
\begin{align} \label{eq:bcc-dft}
	\hat f^0_\bk & = \frac{1}{\sqrt{2}}
    \sparen*{\FF(f^0)(\bk) - a_\bk \FF(f^1)(\bk)}, \\
    \hat f^1_\bk & = \frac{1}{\sqrt{2}}
    \sparen*{\FF(f^0)(\bk) + a_\bk \FF(f^1)(\bk)},
\end{align}
\intentionalspace
with $\bk \in \mathcal{N}$ and $a_\bk = e^{\pi\ims \bk^\TR\v{N}^{-1}\v{t}}$. 
Finally, from the two sequences $\hat f^0_\bk$ and $\hat f^1_\bk$, we can write the inversion formulae for $f^0_\bn$ and $f^1_\bn$ as:
\intentionalspace
\begin{align} \label{eq:bcc-idft}
    f^0_\bn & = \frac{1}{\sqrt{2}} \FF^{-1}(\hat f^0 + \hat f^1)(\bn), \\
    f^1_\bn & = \frac{1}{\sqrt{2}}
    \FF^{-1}\paren{a^*(\hat f^1 - \hat f^0)}(\bn).
\end{align}


\section{Kinematics of the Dirac \QW}

Here we study the kinematics of the Dirac \QW presented in Section
\ref{s:dirac-qw}. We show that there exists a class of states whose
evolution resembles the evolution of a particle with a given
wave-vector. \mbox{Their evolution} can be described by an approximated
differential equation with coefficients depending on the particle
wave-vector. We observe that the positive and negative frequency
eigenstates of the walk correspond to Dirac particle and antiparticle
states.

Finally, we consider the position operator for the Dirac \QW and find
that the mean position of states having both positive and negative
frequency components present the typical jittering phenomenon---denoted
\emph{Zitterbewegung}---of relativistic particles. The \emph{Zitterbewegung} was
first discovered by Schr\"odinger in 1930
\cite{schrodinger:1930aa}, who pointed out that in the Dirac
equation for free relativistic electrons, the velocity operator does
not commute with the Dirac Hamiltonian. As a consequence, the evolution
of the position operator shows---in addition to the classical motion
proportional to the group velocity---a fast periodic oscillation with
frequency $2mc^2$ and amplitude equal to the Compton wavelength
$\hbar/mc$, with $m$ the rest mass of the relativistic particle. This
oscillating motion is due~\cite{huang1952zitterbewegung} to the
interference of states corresponding to the positive and negative
energies firstly appeared as solutions to the Dirac equation. The
trembling is also shown to disappear with time
\cite{lock1979zitterbewegung} for a wave-packet particle state.  The
same phenomenology is recovered in the Dirac \QW scenario that also
presents solutions having positive and negative \mbox{frequency eigenvalues.}

\subsection{Approximated Dispersive Differential Equation}

We denote a quantum state of the walker a particle state if it is
localized in a region of the lattice at a given instant of time and if
the walk evolution preserves its localization. Accordingly, we take the
following of particle state as a state that is narrow-banded in the
wave-vector space.

\begin{Definition}[Particle-state]\label{d:particle-state}
A particle-state $\ket{\psi}$ for the Dirac \QW is a wave-packet smoothly peaked
around some eigenvector $\ket{u(\v{k}')}$ of the walk. Namely, for a given $\bk'$
\begin{equation}
\label{eq:smoothstate}
\ket{\psi} = \frac{1}{(2\pi)^{d/2}} \int_{\mathsf{B}}\dif{\v{k}}\,
g_{\v{k}'}(\v{k}) \ket{\v{k}}\ket{u(\v{k})},
\end{equation}
where $g_{\v{k}'}\in C^\infty_0[\mathsf{B}]$ is a smooth function satisfying the bound
\begin{equation}\label{eq:smoothstate2}
\begin{aligned}
&\frac{1}{(2\pi)^{d}}\int_{\mathsf{B}_{\v{k}'} (\sigma_x,\sigma_y,\sigma_z)}\dif{\v{k}}\,
| g_{\v{k}'} (\v{k}) |^2 \geq 1-\epsilon,\qquad \epsilon
>0,\quad\sigma_i >0, \,i=x,y,z,\\
&{\mathsf{B}_{\v{k}'} (\sigma_x,\sigma_y,\sigma_z)}=\set{\v{k}\in
\mathsf{B} | |k_i-k'_i|\leq \sigma_i}.
\end{aligned}
\end{equation}
\end{Definition}

In the next Proposition, we derive a dispersive differential
equation governing the evolution for particle-states and which makes
clear their particle behaviour. It will be convenient to work with the
continuous time $t$, interpolating exactly the discrete walk
evolution $U^t$. Accordingly, we consider $\v{x}$, $t$ to be
real-valued continuous variable by extending the Fourier transform
\begin{align}
\ket{\psi(\v{x},t)}=\frac{1}{(2\pi)^{d}}\int_\mathsf{B}\dif{\v{k}}\,
e^{i\v{k}\cdot\v{x}} U^t_{\v{k}}\ket{\psi(\v{k},0)}=\frac{1}{(2\pi)^{d}}\int_\mathsf{B}\dif{\v{k}}\,
e^{i\v{k}\cdot \v{x}} e^{-iH(\bk)t}\ket{\psi(\v{k},0)},
\end{align}
to real $\v{x}$, $t$. Since the walk is band-limited in momenta
$\v{k}\in \mathsf{B}$, then the continuous function $\psi(\v{x},t)$ is
completely defined by its value on the discrete points $(\v{x},t)$ of
the walk causal network (the sampling of a band-limited function
is stated in the Nyquist-Shannon {\em Sampling Theorem}).  However,
all numerical results will be given only for the discrete $t$, namely
for repeated applications of the walk unitary operator and for
discrete lattice sites $\v{x}$.

\begin{Proposition}[Dispersive differential equation]\label{p:semi-classical-states}
  Consider the evolution of the Dirac \QW of Section \ref{s:dirac-qw} on a
  particle-state as in Definition \ref{d:particle-state}. Then for any positive
  integer $n$, the state at time $t$ is given by
\begin{align}\label{eq:l-solution}
  \ket{\psi(\v{x},t)}=e^{i(\v{k}'\cdot \v{x}-\omega_{\v{k}'}
  t)}\Ket{\tilde{\phi}(\v{x},t)}-\epsilon -\gamma\Sigma^{n+1} t - {\cal
  O}(\Sigma^{n+3})t,
\end{align}
where $\Ket{\tilde{\phi}(\v{x},t)}$ is solution of the following
differential equation
\begin{align}\label{eq:l-diff-eq}
  i\partial_t
  \Ket{\tilde{\phi}(\v{x},t)}=\sum_{\alpha_x+\alpha_y+\alpha_z=n}\frac{(-i)^{n}\omega_{\v{k}'}^{(\alpha_x,\alpha_y,\alpha_z)}}{\alpha_x
  !\alpha_y
  !\alpha_z
  !}\frac{\partial^{n}}{\partial x^{\alpha_x}\partial y^{\alpha_y}\partial z^{\alpha_z}}
  \Ket{\tilde{\phi}(\v{x},t)},\\
\omega_{\v{k}'}^{(\alpha_x,\alpha_y,\alpha_z)}:=\left.\frac{\partial^{n}\omega_{\v{k}}}{\partial k_x^{\alpha_x} \partial k_y^{\alpha_y} \partial k^{\alpha_z}_z}\right|_{\v{k}=\v{k}'}
  \end{align}
and
\begin{align}
\gamma= (n+1)\sum_{\alpha_x+\alpha_y+\alpha_z=n+1}\frac{ |\omega_{\v{k}'}^{(\alpha_x,\alpha_y,\alpha_z)}|}{(2
  \pi)^d \alpha_x!\alpha_y!\alpha_z!} \int_{\mathsf{B}_{\v{k}'}(\sigma_x,\sigma_y,\sigma_z)}
  \dif{\v{k}}\,  |g_{\v{k}'}(\v{k})|^2,\qquad \Sigma=\mathrm{max}(\sigma_x,\sigma_y,\sigma_z).
\end{align}
\end{Proposition}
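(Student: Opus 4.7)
The plan is to work directly in the Fourier representation of the walk-evolved state and execute a standard semiclassical-style expansion, consisting of (i)~extracting the reference plane-wave phase $e^{\ims(\bk'\cdot\bx - \omega_{\bk'}t)}$, (ii)~restricting the integral to the concentration box $\mathsf{B}_{\bk'}(\sigma_x,\sigma_y,\sigma_z)$ of Definition~\ref{d:particle-state}, and (iii)~Taylor-expanding $\omega_{\bk}$ around $\bk'$ up to order~$n$. The three sources of error that result are exactly the three terms $\epsilon$, $\gamma\Sigma^{n+1}t$ and $\mathcal O(\Sigma^{n+3})t$ appearing in \eqref{eq:l-solution}. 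The spinorial factor $\ket{u(\bk)}$ plays no role in the phase analysis and can be treated as a bounded vector-valued coefficient throughout, so the argument is essentially the same as for a scalar dispersive wave-packet.

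Concretely, I would start from the Fourier form of the evolved state
\begin{equation*}
\ket{\psi(\bx,t)} = \frac{1}{(2\pi)^{d/2}}\int_{\mathsf{B}} g_{\bk'}(\bk)\, e^{\ims(\bk\cdot\bx - \omega_\bk t)}\ket{u(\bk)}\dif{\bk},
\end{equation*}
change variables to $\boldsymbol\xi := \bk - \bk'$, pull the reference phase $e^{\ims(\bk'\cdot\bx - \omega_{\bk'} t)}$ outside, and use the localization bound~\eqref{eq:smoothstate2} to truncate the remaining integral to $\mathsf{B}_{\bk'}(\sigma_x,\sigma_y,\sigma_z)$ at an $L^2$ cost of order $\epsilon$. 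On the retained box I would then expand
\begin{equation*}
\omega_\bk - \omega_{\bk'} = \sum_{1\leq |\alpha|\leq n}\frac{\omega_{\bk'}^{(\alpha)}}{\alpha!}\boldsymbol\xi^\alpha + R_n(\boldsymbol\xi),
\end{equation*}
with Lagrange remainder obeying $|R_n(\boldsymbol\xi)| \leq \Sigma^{n+1}\sum_{|\alpha|=n+1}|\omega_{\bk'}^{(\alpha)}|/\alpha! + \mathcal O(\Sigma^{n+2})$, and \emph{define} $\ket{\tilde\phi(\bx,t)}$ as the truncated integral in which $\omega_\bk - \omega_{\bk'}$ is replaced by its Taylor polynomial.

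With this definition, the PDE \eqref{eq:l-diff-eq} is immediate by differentiating under the integral: time differentiation brings down the Taylor polynomial, and each factor $\boldsymbol\xi^\alpha$ in momentum space translates into $(-\ims)^{|\alpha|}\partial^\alpha$ acting on $e^{\ims\boldsymbol\xi\cdot\bx}$, producing exactly the prefactor $(-\ims)^n \omega_{\bk'}^{(\alpha)}/\alpha!$ at the relevant order~$n$. The Taylor-truncation error is controlled via the elementary bound $|e^{-\ims R_n t}-1|\leq |R_n|\,t$; after integrating against $|g_{\bk'}|^2$ on the box and invoking Cauchy--Schwarz to pass from $|g|$ to $|g|^2$ in the $L^2$ norm of the error vector, this yields precisely the $\gamma \Sigma^{n+1} t$ contribution with the prefactor $(n+1)\sum_{|\alpha|=n+1}|\omega_{\bk'}^{(\alpha)}|/\alpha!$ written in the statement. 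The next term $-\tfrac{1}{2}(R_n t)^2$ in the exponential together with the next Taylor derivative of $\omega_\bk$ then gives the $\mathcal O(\Sigma^{n+3})t$ residual.

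The main obstacle is the careful bookkeeping of the three error sources in a single common norm: although each individual estimate is a textbook calculation, one has to verify that the localization tail ($\epsilon$), the Lagrange-remainder error ($\gamma\Sigma^{n+1}t$), and the quadratic exponential correction ($\mathcal O(\Sigma^{n+3})t$) combine additively without hidden cross-terms, and that the $(n+1)$ prefactor and the factor $|g_{\bk'}|^2$ in $\gamma$ arise naturally from Taylor's theorem with integral remainder together with the normalization $\|g_{\bk'}\|_{L^2} \leq 1$. The multi-index combinatorics—pairing the Taylor coefficients with the spatial derivative operators produced by the Fourier transform—must also be tracked carefully so that factorials and signs match the statement.
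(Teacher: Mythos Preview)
Your strategy is the same as the paper's---extract the reference phase, Taylor-expand $\omega_{\bk}-\omega_{\bk'}$, and split the wave-vector integral into the concentration box and its complement---but there is one technical mismatch that would prevent you from reproducing the stated bounds verbatim. The paper does \emph{not} measure the error as an $L^2$ norm of the difference between $\ket{\psi}$ and $\ket{\tilde\psi}$; it measures it as the fidelity $|\braket{\tilde\psi(t)|\psi(t)}|$ and shows that this overlap is $\geq 1-\epsilon-\gamma\Sigma^{n+1}t-\mathcal O(\Sigma^{n+3})t$. This matters in two places. First, the overlap is the integral of $e^{-\ims t\,(\text{remainder})}|g_{\bk'}(\bk)|^2$ over $\mathsf{B}$, so the tail contribution on $\mathsf{B}\setminus\mathsf{B}_{\bk'}$ is bounded directly by $\int_{\mathsf{B}\setminus\mathsf{B}_{\bk'}}|g_{\bk'}|^2\leq\epsilon$; your $L^2$ truncation would instead give $\sqrt{\epsilon}$ for the tail of the state, and the Cauchy--Schwarz step you sketch does not repair this. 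Second, in the paper's setup $\ket{\tilde\phi}$ is defined as the solution of the truncated PDE with the \emph{full} initial datum $\ket{\tilde\phi(\bx,0)}=\ket{\phi(\bx,0)}$ (no restriction to the box), so the box/complement split only enters later, inside the overlap integral; defining $\ket{\tilde\phi}$ on the truncated box, as you propose, changes the object whose PDE you are verifying.

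If you simply replace ``$L^2$ norm of the error vector'' by ``compute $|\braket{\tilde\psi(t)|\psi(t)}|$ in momentum space,'' keep $\ket{\tilde\phi}$ defined over all of $\mathsf{B}$, and then split the resulting integral of $e^{-\ims t R_n(\bk)}|g_{\bk'}(\bk)|^2$ into box and complement, your remaining bookkeeping (the $|e^{-\ims R_n t}-1|\le |R_n|t$ bound, the Lagrange remainder on the box giving the factor $(n+1)\sum_{|\alpha|=n+1}|\omega^{(\alpha)}_{\bk'}|/\alpha!$, and the $\mathcal O(\Sigma^{n+3})t$ from the next order) goes through exactly as you describe and matches the paper line for line.
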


\begin{proof}
First we notice that at time $t$ the particle-state in the
momentum representation is simply
$\ket{\psi(\v{k},t)}=e^{-i\omega_{\v{k}}
  t}\ket{\psi(\bk,0)}=e^{-i\omega_{\v{k}} t}g_{\v{k}'}(\v{k})
\ket{u(\v{k})}$, while in the position representation it is
\begin{align}\label{eq:smoothstatetimet-psi}
&\ket{\psi (\v{x},t) } \coloneqq e^{i(\v{k}' \v x -\omega_r(\v{k}') t) } \ket{{\phi}(\v{x},t)}, \\
&\ket{{\phi}(\v{x},t)} \coloneqq \frac{1}{(2\pi)^{d}}\int_{\mathsf{B}}\dif{\v{k}}\, e^{i(\v{K}\cdot
  \v{x}-\Omega_{\v{k}}t) }g_{\v{k}'}(\v{k})
   \ket{u(\v{k})},\\ \nonumber
&\v{K} = \v{k}-\v{k}',\\ \nonumber 
&\Omega_{\v{k}} = \omega_{\v{k}}-\omega_{\v{k}'}.
\end{align}

Now we take the time derivative of $\ket{\phi(\v{x},t)}$ and
expand $\Omega$ \textit{vs.} $\v{k}$ in $\v{k}'$. The coefficients of the
expansion can be regarded as derivatives with respect to the space
coordinates and taken out of the integral (dominated derivative
theorem), leading to the following dispersive differential equation:

\begin{align}\label{eq:diff-expansion}
  i\partial_t
  \Ket{\phi(\v x,t)}=\sum_{|\alpha|=1}^\infty\frac{(-i)^\alpha \omega_{\v{k}'}^{(\alpha)}}{\alpha !}\frac{\partial^{|\alpha|}}{\partial  \v{x}^\alpha}
  \Ket{\phi(\v{x},t)},\qquad
  \omega_{\v{k}'}^{(\alpha)}=\left.\frac{\partial^{|\alpha|}\omega(\v{k})}{\partial\v{k}^\alpha}\right|_{\v{k}=\v{k}'},
\end{align}
where $\alpha=(\alpha_x,\alpha_y,\alpha_z)$ is a multiindex and
$|\alpha|=\alpha_x+\alpha_y+\alpha_z$. If we truncate the above
expansion at the $n$th order and denote by
$\Ket{\tilde{\phi}(\v{x},t)}$ the solution of the corresponding
truncated differential equation, with the identification of the
initial condition
$\Ket{\tilde{\phi}(\v{x},0)} = \ket{{\phi}(\v{x},0)}$, we get the
approximate state Equation \eqref{eq:smoothstatetimet-psi} at time $t$
\begin{align}\label{eq:approxstate0}
 \Ket{\tilde{\psi}(\v{x},t)}= e^{i(\v{k}' \cdot\v{x} - \omega_{\v{k}'}t)} \Ket{\tilde{\phi}(\v{x},t)}.
\end{align}

Using the definition of particle-state in Definition
\ref{d:particle-state}, one can compute the accuracy of the
approximation Equation \eqref{eq:approxstate0} in terms of the parameters
$\sigma_x,\sigma_y,\sigma_z$, and $\epsilon$, evaluating the overlap
between the states Equations \eqref{eq:smoothstatetimet-psi} and
\eqref{eq:approxstate0}. That is,
\begin{equation}
\begin{small}
\begin{aligned}
  |\braket{\tilde{\psi}(t)|\psi(t)}| &= \left|
    \frac{1}{(2\pi)^d}\int_{\mathsf{B}}\dif{\v{k}} \,
    e^{-it(\sum_{|\alpha|=n+1} \frac{\omega_{\v{k}'}^{(\alpha)}}{\alpha!} (\v{k}-\v{k}')^\alpha
      +\sum_{|\alpha|=n+2}{\cal O}((\v{k}-\v{k}')^\alpha))
    }|g_{\v{k}'}(\v{k})|^2
  \right| \\
  &\geq
  \left|\frac{1}{(2\pi)^d}\int_{\mathsf{B}_{\v{k}'}(\sigma_x,\sigma_y,\sigma_z)}
    \dif{\v{k}}\, e^{-it(\sum_{|\alpha|=n+1}
      \frac{\omega_{\v{k}'}^{(\alpha)}}{\alpha!} (\v{k}-\v{k}')^\alpha +\sum_{|\alpha|=n+2}{\cal
        O}((\v{k}-\v{k}')^\alpha )) }|g_{\v{k}'}(\v k)|^2 \right| 
  \\
  &\qquad \qquad \qquad -\left|\frac{1}{(2\pi)^d}
    \int_{\mathsf{B}\setminus\mathsf{B}_{\v{k}'} (\sigma_x,\sigma_y,\sigma_z)}
    \dif{\v{k}}\, e^{-it(\sum_{|\alpha|=n+1}
      \frac{\omega_{\v{k}'}^{(\alpha)}}{\alpha!} (\v{k}-\v{k}')^\alpha +\sum_{|\alpha|=n+2}{\cal
        O}((\v{k}-\v{k}')^\alpha )) }|g_{\v{k}'}(\v{k})|^2 \right|
  \\
  &\geq\left| 1 - i (n+1) t \sum_{|\alpha|=n+1}\frac{ \omega_{\v{k}'}^{({\alpha})} \Sigma^{n+1}}{(2
      \pi)^d\alpha!} \int_{\mathsf{B}_{\v{k}'} (\sigma_x,\sigma_y,\sigma_z)} \dif{\v{k}}\, |g_{\v{k}'}(\v{k})|^2 -
    {\cal O}(\Sigma^{n+3})t \right|- \epsilon \\
&\geq 1 - \epsilon
  -\gamma\Sigma^{n+1} t - {\cal O}(\Sigma^{n+3})t\label{eq:overlap}
\end{aligned}
\end{small}
\end{equation}
where
\begin{align}\nonumber
\gamma= (n+1)\sum_{|\alpha|=n+1}\frac{ |\omega_{\v{k}'}^{(\alpha)}|}{(2
  \pi)^d\alpha!} \int_{\mathsf{B}_{\v{k}'}(\sigma_x,\sigma_y,\sigma_z)}
  \dif{\v{k}}\,  |g_{\v{k}'}(\v{k})|^2,\qquad \Sigma=\mathrm{max}(\sigma_x,\sigma_y,\sigma_z).
\end{align}

Therefore, the exact state $\ket{\psi(\v{x},t)}$ at time $t$ can be
approximated  by Equation (\ref{eq:approxstate0}) with the accuracy given by the
overlap in Equation (\ref{eq:overlap}). 
\end{proof}

This approximation fails to be accurate for a sufficiently large value of $t$. More precisely, if we require the overlap to satisfy $|\braket{\tilde{\psi}(t)|\psi(t)}| > 1-\delta$, for some $\delta > 0$, then for $t > \frac{\delta - \epsilon}{\gamma \Sigma^{n+1}}$ the approximated solution can deviate significantly from that of the \QW.
A typical application of the above proposition is the second order
approximation of the state evolution. In that case,
Equation~(\eqref{eq:l-diff-eq}) gives
\begin{equation}\label{eq:diff-eq-2}
\begin{aligned}
&  i\partial_t
 \Ket{\tilde{\phi}(\v{x},t)}=\left[-i\v{v}_{\v{k}'}\cdot\nabla-\frac{1}{2}\nabla^{T}\cdot\v{D}_{\v{k}'}\cdot\nabla\right]
  \Ket{\tilde{\phi}(\v{x},t)},\\
&\v{v}_{\v{k}}=\nabla_\bk\omega_{\v{k}},\qquad
\v{D}_{\v{k}}=\nabla_{\v{k}}\nabla_{\v{k}}\omega_{\v{k}},
  \end{aligned}
\end{equation}
where $\v{v}_{\v{k}}$ and $\v{D}_{\v{k}}$ are respectively to drift
vector and to the diffusion tensor for the particle state. Accordingly,
the state will translate with group velocity given by the drift vector
and its distribution in space will spread as described by the
diffusion tensor. 

In Figure~\ref{fig:gauss3d}, we show the numerical evolution (see Section
\ref{s:numerical}) of a Gaussian particle-state in $3+1$~dimensions.

\begin{figure}[H]
    \centering
    \includegraphics[width=0.35\textwidth]{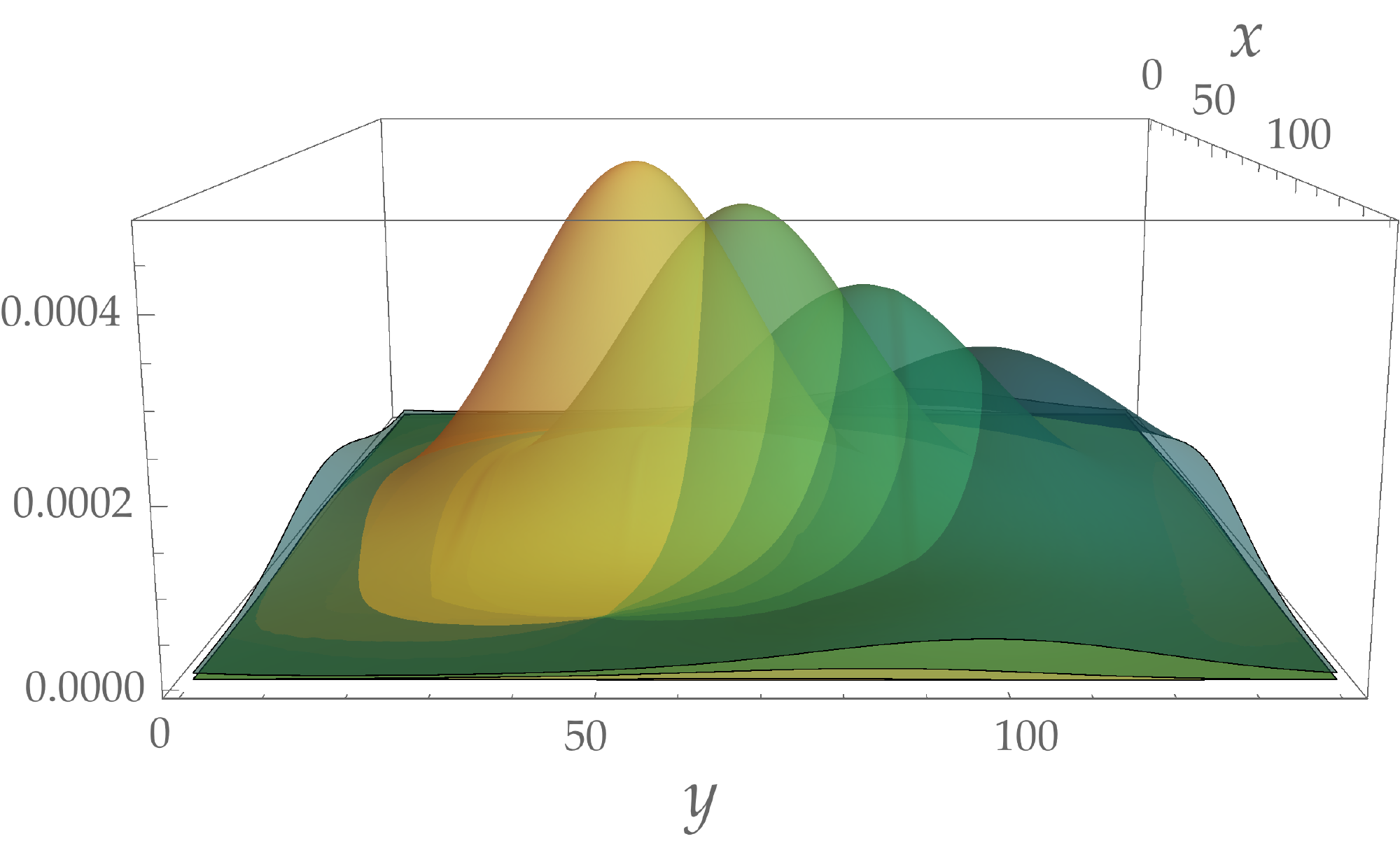}
    \caption{Evolution for $t=150$ time-steps of a particle state as in \Cref{d:particle-state} for the Dirac \QW in $3+1$ dimensions with only positive energy components (the marginal along the $z$ axis is represented). Here the state is Gaussian with parameters: mass $m=0.02$, mean wave-vector $\bk' = (0,0.01,0)$, width $\sigma_i = \sigma = 32^{-1}$ for $i = x,y,z$. Evolution in time is shown as a colour gradient from light to dark; one can notice the spreading of the wave-packet as time increases.
    Probability distribution shown at times $t=0,50,100,150$.}\label{fig:gauss3d}
\end{figure}

\subsection{The Evolution of the \QW Position Operator}

Up to now we have considered only smooth-states (see Definition
\ref{d:particle-state}) whose walk evolution is well described by the
approximate differential equation derived in Proposition
\ref{p:semi-classical-states}. On the other hand, in the \QW
framework we are allowed to consider states very far apart from the
smooth ones, and in the limit one can also consider perfectly localized
states as $\ket{\psi}=\ket{\v{x}}\ket{\zeta}$ with $\v{x}\in\Z^d$ and
$\ket{\zeta}=\sum_r c_r\ket{r}\in\C^4$, $\sum_r|c_r|^2=1$,
where $\{\ket{r}\}_{r=1}^4$ denotes the $\C^4$ basis
corresponding to the Dirac field representation in Equation~\eqref{eq:dirac}.
Since these states involve large momentum components, their evolution according to the \QW dynamics will be very different from the one given by the Dirac equation. 
We can say that the \QW determines different regimes with respect to a given reference scale at which the evolution deviates from the relativistic regime given by the Dirac equation \cite{bibeau2015doubly,Bisio:2015ac}.
However, in a \QW context, the study of such states can give essential information regarding the dynamical properties of these models \cite{Aharonov:2001aa,Ambainis:2001aa,Nayak:2000aa,Kempe:2003aa}.
One can see in Figures \ref{fig:loc-3d-3} and \ref{fig:loc-3d-proj} the numerical evolution (see Section \ref{s:numerical}) of a perfectly localised state, according to the Dirac \QW in $3+1$-dimensions.

\begin{figure}[H]
	\centering
	\begin{subfigure}[c]{0.25\textwidth}
		\includegraphics[width=\textwidth]{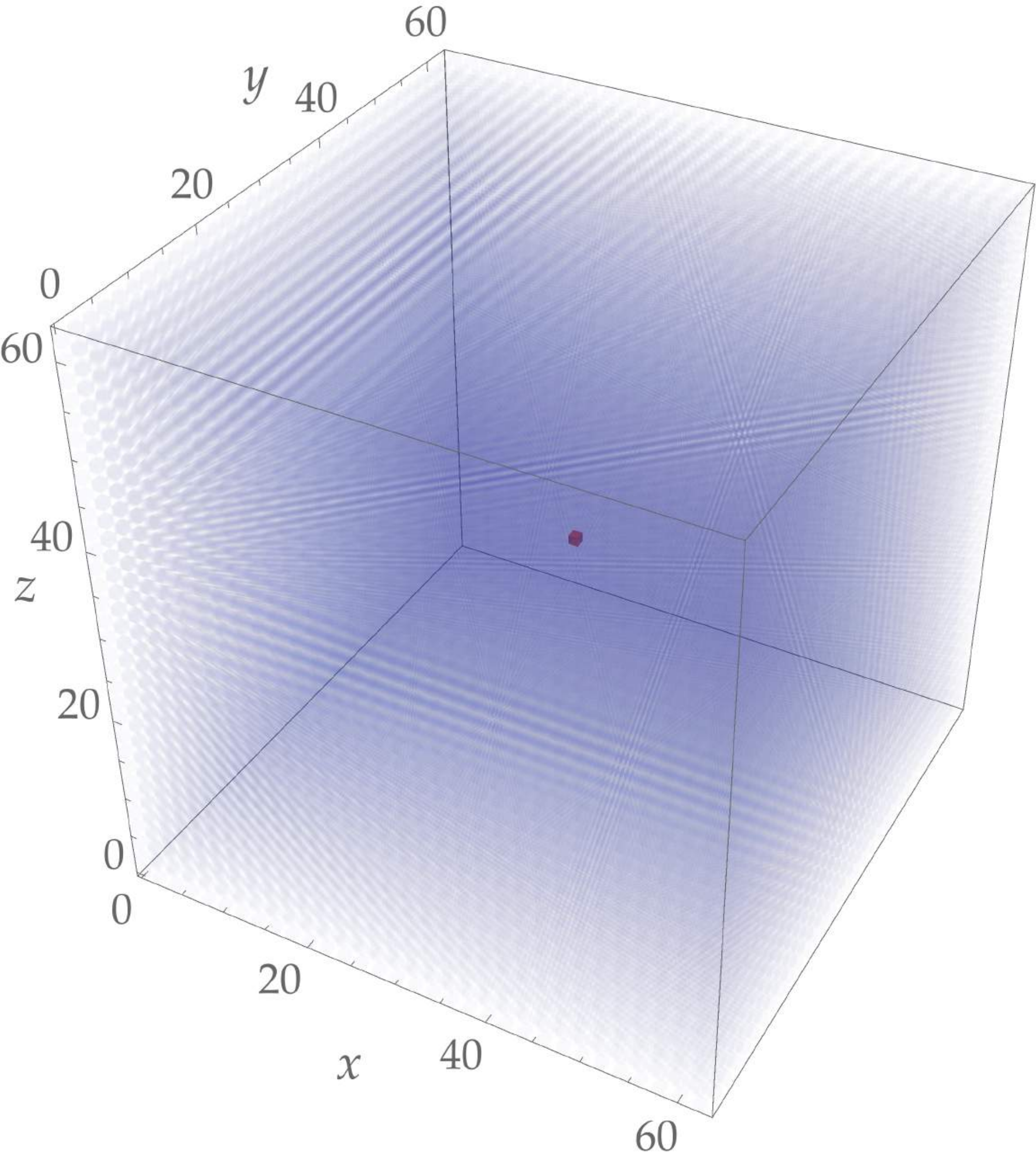}
	\end{subfigure}
	\quad
	\begin{subfigure}[c]{0.25\textwidth}
		\includegraphics[width=\textwidth]{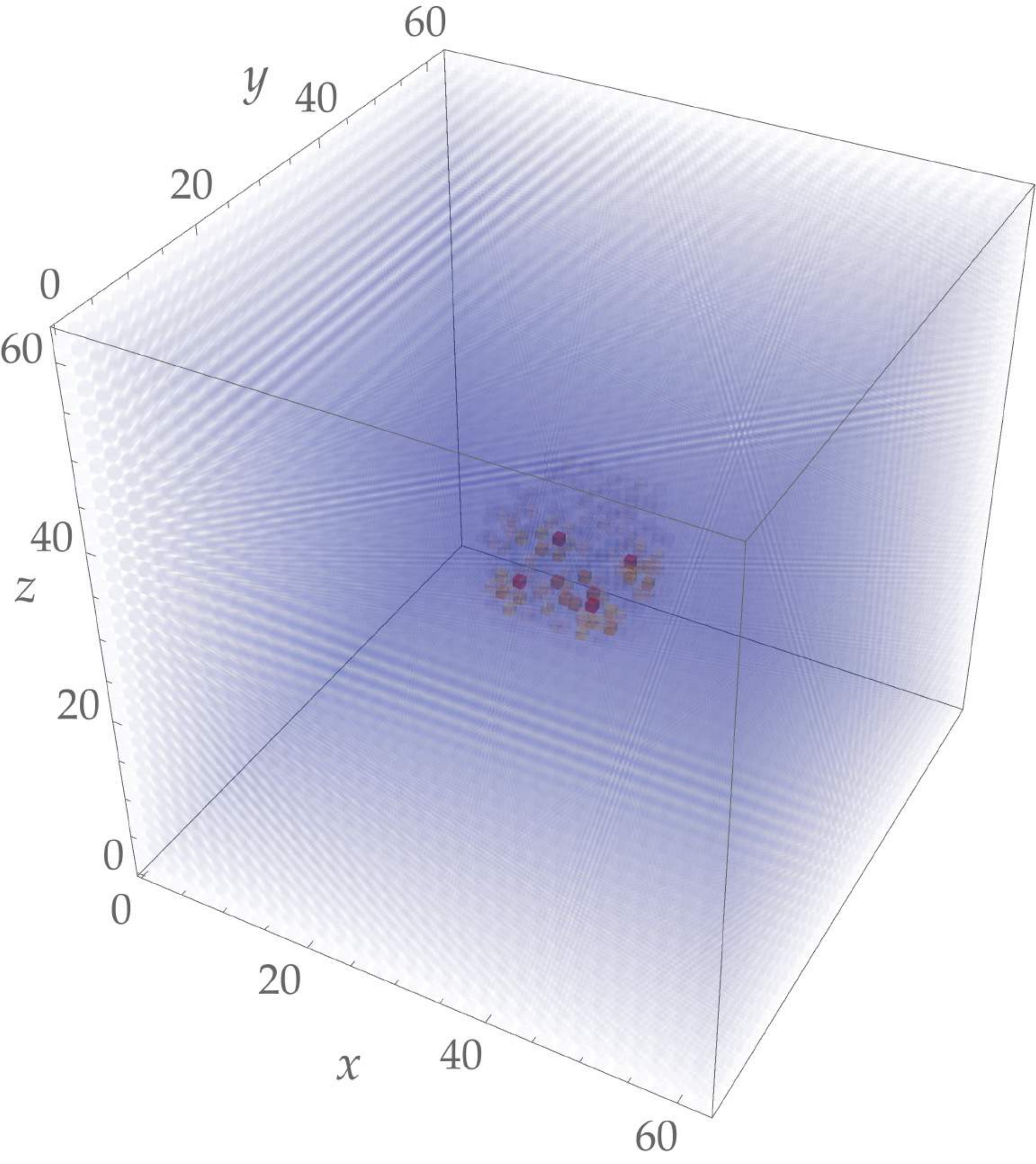}
	\end{subfigure}
	\quad
	\begin{subfigure}[c]{0.25\textwidth}
		\includegraphics[width=\textwidth]{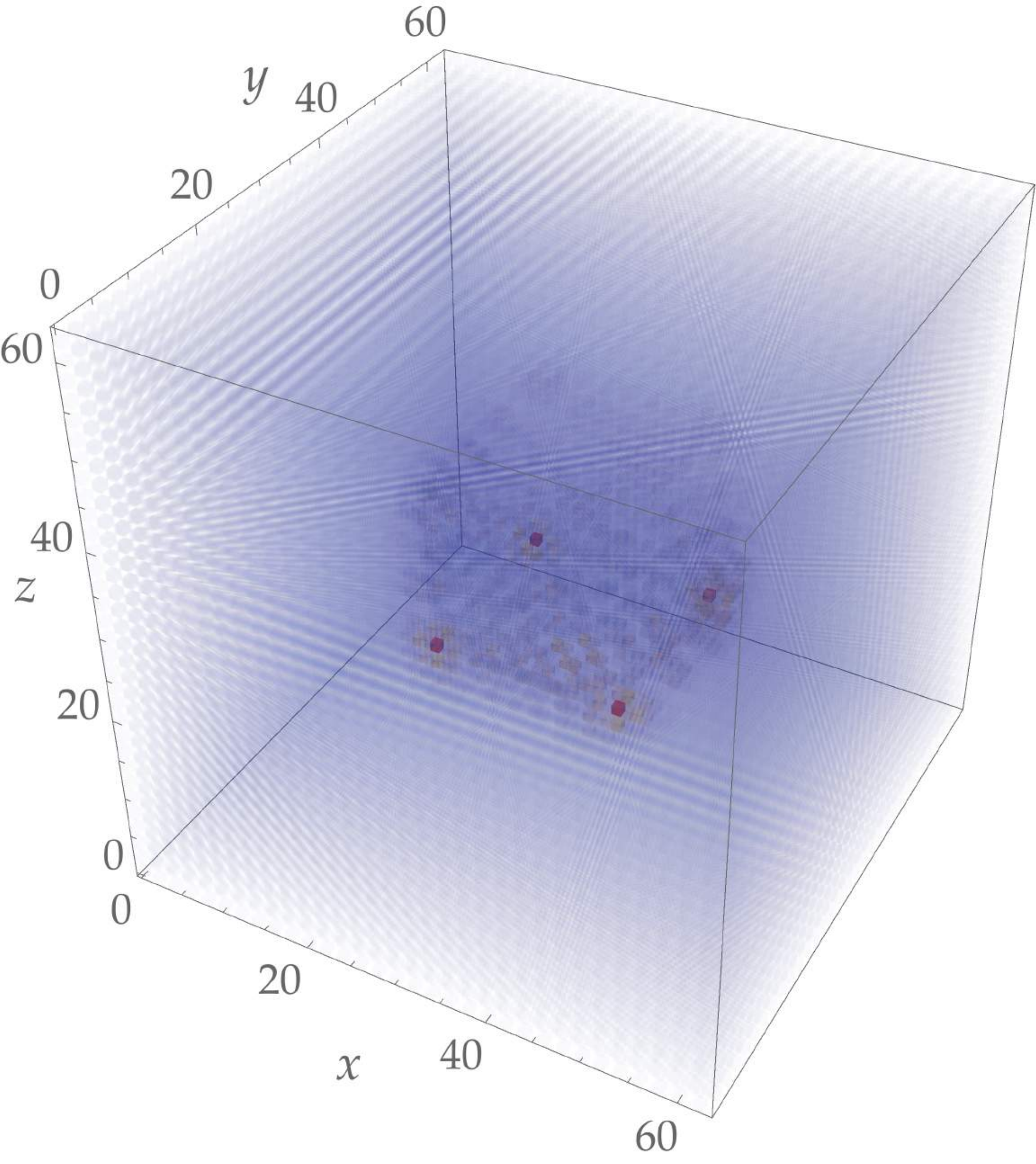}
	\end{subfigure}
	\caption{Evolution of a perfectly localised state for the Dirac \QW in $3+1$-dimensions. The figures show the probability distribution at times $t=0,8,16$, from left to right. In this case, the mass parameter is $m=0.03$ and the spinor in the canonical basis is $(1,0,0,0)$.}
	\label{fig:loc-3d-3}
\end{figure}

\begin{figure}[H]
	\centering
	\begin{subfigure}[c]{0.25\textwidth}
		\includegraphics[width=\textwidth]{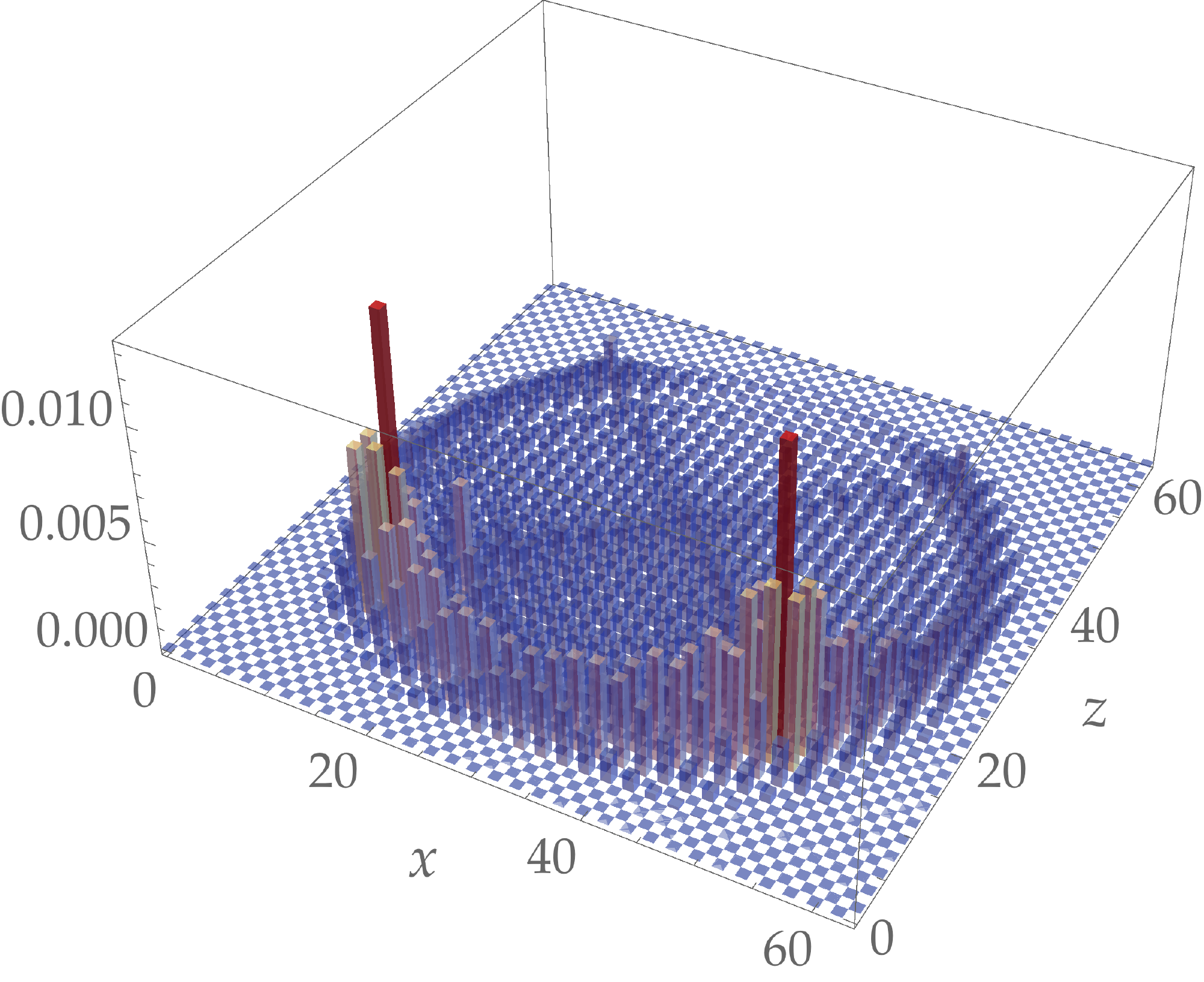}
	\end{subfigure}
	\quad
	\begin{subfigure}[c]{0.25\textwidth}
		\includegraphics[width=\textwidth]{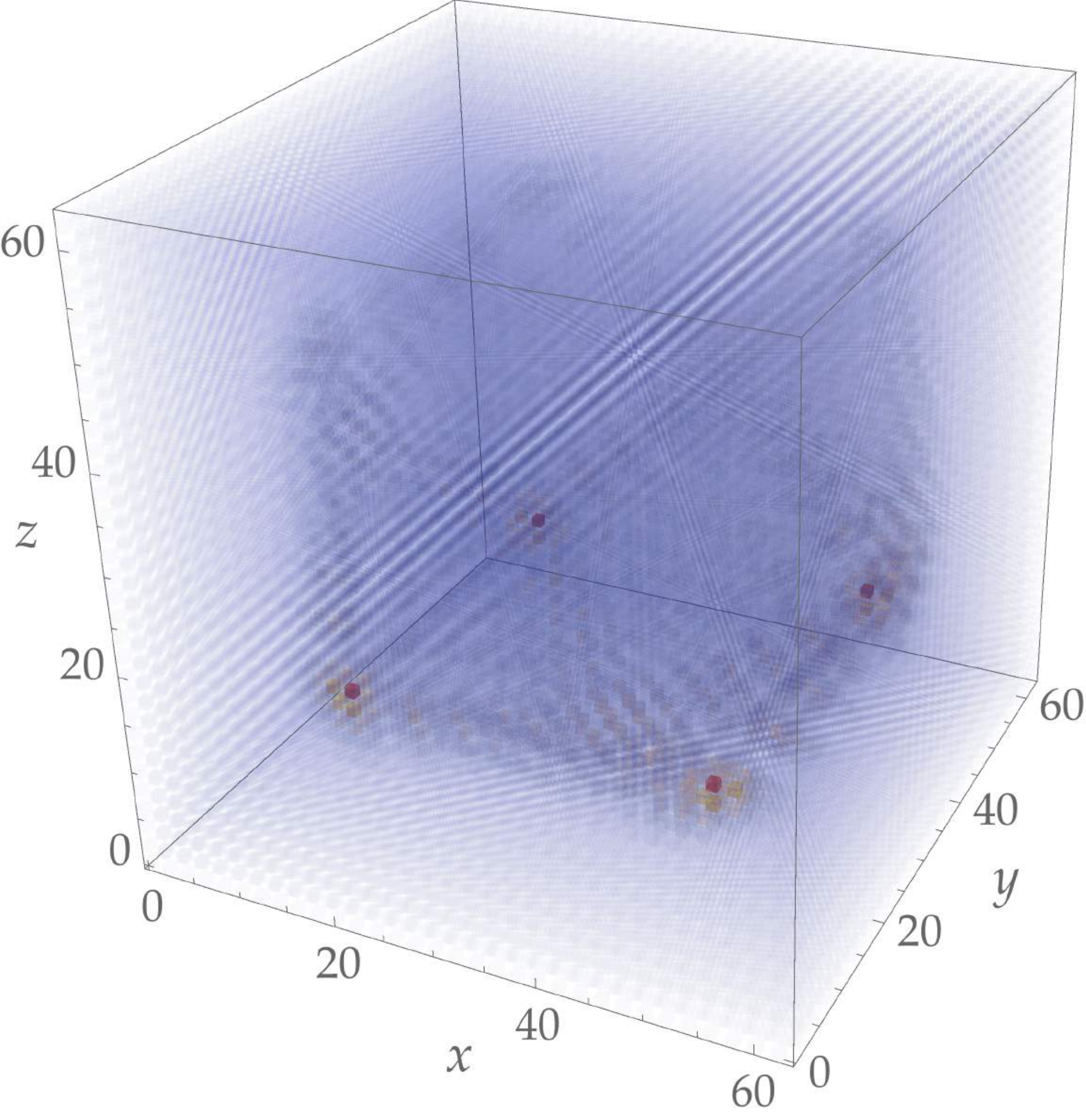}
	\end{subfigure}
	\quad
	\begin{subfigure}[c]{0.25\textwidth}
		\includegraphics[width=\textwidth]{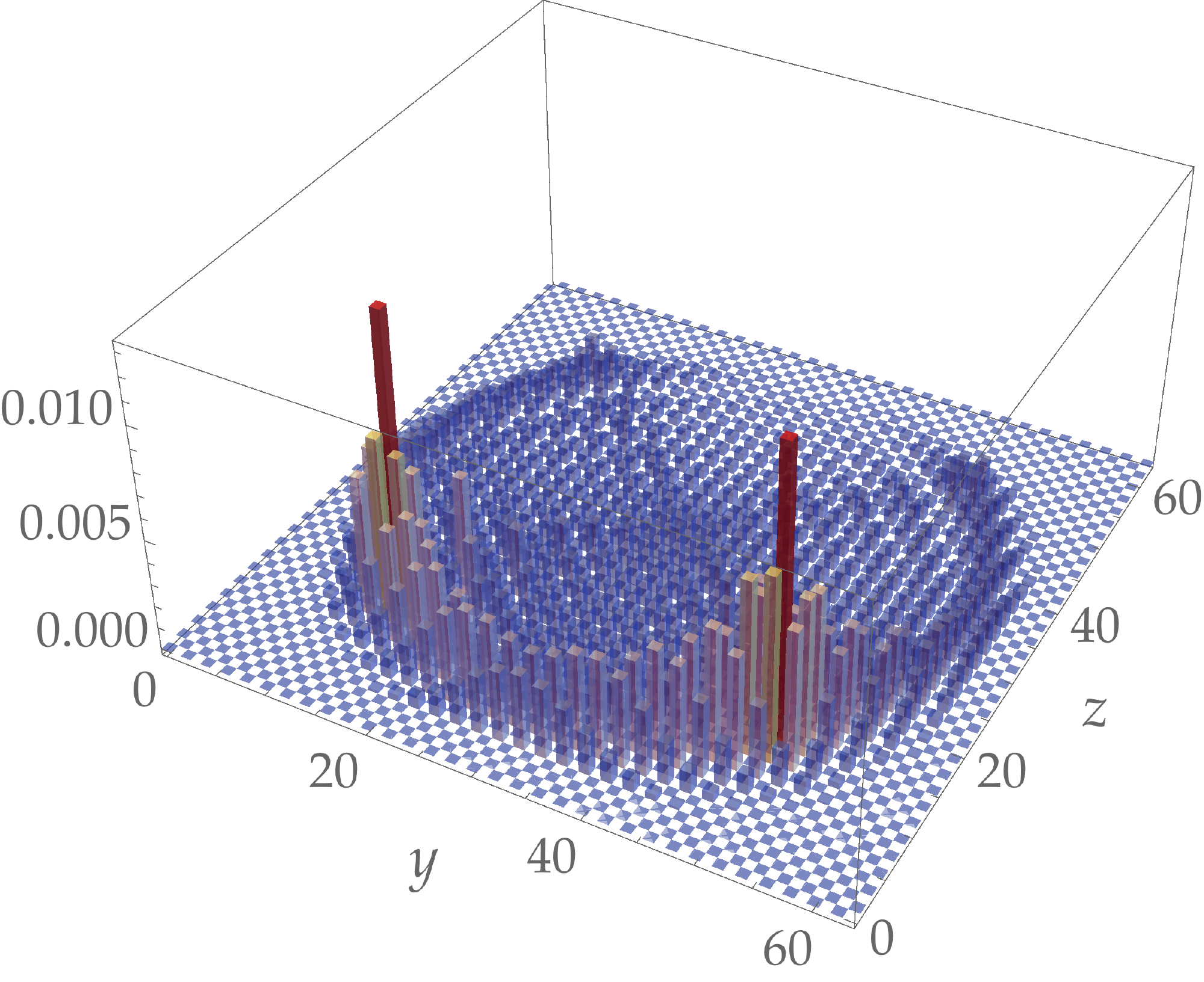}
	\end{subfigure}
	
	\begin{subfigure}[c]{0.25\textwidth}
		\includegraphics[width=\textwidth]{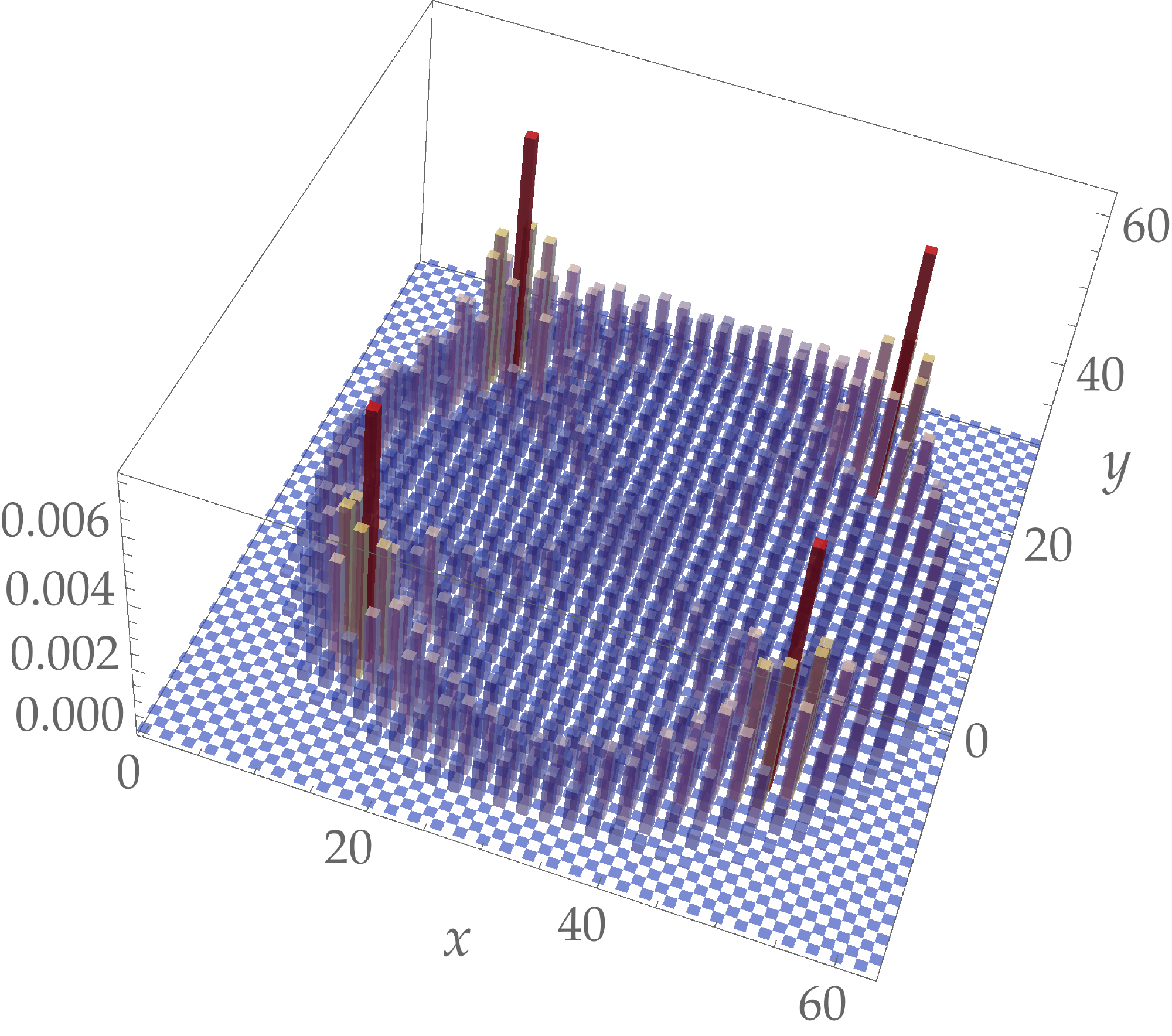}
	\end{subfigure}
	
	\caption{Evolution of a perfectly localised state for the Dirac
		\QW in $3+1$-dimensions. Top-centre: final probability distribution of the same initial state of Figure \ref{fig:loc-3d-3}, after $t=28$ time-steps.
		Top-left: projection of the state along the $y$-axis; top-right: projection along the $x$-axis; bottom: projection on the $(x,y)$-plane.}
	\label{fig:loc-3d-proj}
\end{figure}

The position operator $X$ providing the representation $\ket{\v{x}}$---namely, the
operator such that $X\ket{\v{x}}\ket{\zeta}=x\ket{\v{x}}\ket{\zeta}$---
is $X=\sum_{\v{x}\in\Z^d}\v{x}(\ketbra{\v{x}}{\v{x}}\otimes I)$.
Accordingly, the average position for an arbitrary one-particle state
$\ket{\psi}=\sum_{\v{x},r} g_r (\v{x})\ket{\v{x}}\ket{r}$ is
given by $\braket{\psi|X|\psi}$.

The definition of the mechanical momentum would need an interacting
theory allowing momentum exchange between different
particles. However, in Section \ref{s:dirac-qw}, we have seen that for small $\bk$ and
$m$, the wave-vector $\bk$ (namely the conjugated variable of $\bx$ via
the Fourier transform) corresponds to the Dirac particle
momentum. Moreover, the momentum operator should correspond to the
generator of translations over the lattice. Therefore, as conjugated
momentum we take the following operator
$\v P=\frac{1}{(2\pi)^d}\int_{\mathsf{B}}\dif{\v{k}}\,\v{k}
(\ketbra{\v{k}}{\v{k}}\otimes I)$.
We can now compute the commutator between $X_i$ and $P_j$, $i,j = x,y,z$. That is,
\begin{align}\nonumber
[X_i,P_j]&=\delta_{ij}\frac{1}{2\pi}\sum_{x_i}\int_{-\pi}^{\pi} \dif{k_j}\,\sum_{y_i} y_ik_j \ketbra{x_i}{y_i}
       e^{-ik_j (x_i-y_i)}-\delta_{ij}\frac{1}{2\pi}\int_{-\pi}^{\pi}\dif{k_j}
           \,\sum_{z_i,w_i} z_i k_j  \ketbra{w_i}{z_i} e^{ik_j (z_i-w_i)}\\
     &=\delta_{ij}\frac{1}{2\pi}\sum_{x_i,y_i} \int_{-\pi}^{\pi}\;\dif{k_j} \,
       (x_i-y_i) k_j  \ketbra{x_i}{y}e^{-ik_j(x_i-y_i)},
\end{align}
where in the second equality it was possible to interchange the sum
and the integral according to the Fubini Theorem. Integrating by parts
we get
\begin{align}\label{eq:commutator}
  \bra{\psi}[X_i,P_j]\ket{\psi}=i\left(1-\frac{1}{2}\sum_{r}\left(|\hat {
  g}_{r}(\pi)|^2+ |\hat {g}_{r}(-\pi)|^2\right)\right)\delta_{ij},
\end{align}
where $\ket{\psi}=\sum_{\v{x},r} g_\nu (\v{x})\ket{\v{x}}\ket{r}$
is a generic state and $g(\v{k})$ the discrete Fourier transform of
$g(\v{x})$.  We notice that Equation~\eqref{eq:commutator} differs from the
usual canonical commutation relation by a boundary term, in agreement
with the existence of perfectly localized states for the walk
$\ket{\v{x}}\ket{\zeta}=\sum_{\v{y}\,r}
g_r(\v{y})\ket{\v{y}}\ket{r}$,
$g_r(\v{y})=c_{r}\delta_{\v{x}\v{y}}$, for which the expectation
value in Equation~\eqref{eq:commutator} vanishes.  In the following
evolution of the position expectation value, we will consider states
having negligible boundary term in Equation~\eqref{eq:commutator}.

The evolution of the position operator $\v X(t)={U}^{ - t} {\v X}{U}^{t} $
can be computed via the velocity and the acceleration operators
derived by the commutator with the walk Hamiltonian
\begin{align}
\v{V}(t)=i[H,\v X(t)]=\int_{\mathsf{B}} \dif{\v{k}}\,
  \ketbra{\v{k}}{\v{k}}\otimes \v V(\v{k}) ,\quad
\v{A}(t)=i[H,\v V(t)]=\int_{\mathsf{B}} \dif{\v{k}}\,
  \ketbra{\v{k}}{\v{k}}\otimes \v A(\v{k}),
\end{align}

From direct computation (and neglecting the boundary terms of the
commutators), it follows
\begin{align}
V_j(\v{k})&=\frac{\sin\omega_{\v{k}}-\omega_{\v{k}}\cos\omega_{\v{k}}}{\omega_{\v{k}}\sin\omega_{\v{k}}}H(\v{k})(\v{v}_{\v{k}})_j+\frac{\omega_{\v{k}}}{\sin\omega_{\v{k}}}
            n\gamma_0\mat{\gamma}\cdot\partial_{k_j}\tilde{\v{n}}_{\v{k}},
\end{align}
and
\begin{align}
A_j(\v{k})=2n\frac{\omega^2_{\v{k}}}{\sin^2\omega_{\v{k}}}\left(
n\sum_{\mu<\nu}\gamma_\mu\gamma_\nu f^{(j)}_{\mu\nu}-m\mat{\gamma}\cdot\partial_{k_j}\tilde{\v{n}}_{\v{k}}\right)\\
f^{(j)}_{\mu\nu}\coloneqq\left((\tilde{\v{n}}_{\v{k}})_\nu\partial_{k_j}(\tilde{\v{n}}_{\v{k}})_\mu-(\tilde{\v{n}}_{\v{k}})_\mu\partial_{k_j}(\tilde{\v{n}}_{\v{k}})_\nu\right).
\end{align}

Now we can derive the analytical expression of $\v X(t)$ by doubly
integrating the acceleration operator $\v A(t)$, with
$\v A(\v{k},t)=e^{iH(\v{k})t} \v A(\v{k})e^{-iH(\v{k})t}$. A lengthy but
simple computation (notice that
$n_3 f^{(j)}_{12}-n_2 f^{(j)}_{13}+n_1 f^{(j)}_{23}=0$) shows that
$[H(\v{k}),\v A(\v{k})]_+=0$, which gives
\begin{align}\label{eq:acc}
\v A(\v{k},t)=e^{2iH(\v{k})t} \v A(\v{k}).
\end{align}

Therefore, integrating the first time we get
\begin{equation}\label{eq:Vt}
\begin{aligned}
    \v V(\v{k},t) & = \hat{\v V}(\v{k})+ \v Z^V(\v{k},t),\\
     \hat{\v V}(\v{k}) & = \v V(\v{k})- \v Z^V(\v{k},0), \\ \v Z^V(\v{k},t) & = \frac{1}{2\im} H^{-1}(\v{k}) \v A(\v{k},t),
\end{aligned}
\end{equation}
with $H^{-1}(\v{k})=\omega_{\v{k}}^{-2}\,H(\v{k})$, and integrating again one has
\begin{equation}\label{eq:Zx}
\begin{aligned}
    \v{X}(t) & = \v{X}(0)+\hat{\v{V}} t+\v{Z}^{{X}}(t)-\v{Z}^{{X}}(0),\\   
    \v Z^X(\v{k},t) & = -\frac{1}{4}H^{-2}(\v{k}) \v A(\v{k},t),
\end{aligned}
\end{equation}
where 
\begin{equation}
\begin{aligned}
  Z^X_j(\v{k},t) & =
      -\frac{\omega_{\v{k}}^3}{2\sin\omega^3_{\v{k}}}e^{2iH(\v{k})t}
      \left(n^2\gamma_0\mat\gamma\cdot\v{w}^{(j)} +
          nm\tilde{\v{n}}\cdot\partial_{k_j}\tilde{\v{n}} +
          m^2\gamma_0\mat\gamma\cdot\partial_{k_j}\tilde{\v{n}}\right),\\
  \v{w}^{(j)} & \coloneqq
      \begin{pmatrix}
          n_3f_{13}^{(j)}+n_2f_{12}^{(j)}\\
          -n_1f_{12}^{(j)}+n_3f_{23}^{(j)}\\
          -n_1f_{13}^{(j)}+n_2f_{23}^{(j)}
      \end{pmatrix}.
\end{aligned}
\end{equation}

The operator $\hat{\v V}$ in Equations \eqref{eq:Vt} and \eqref{eq:Zx} is the
classical component of the velocity operator which, in the Hamiltonian
diagonal basis Equation \eqref{eq:fw}, is proportional to the group velocity
$\hat{\v V}(\v{k})\propto(\sigma_z\otimes I)\v{v}_{\v{k}}$.
In addition to the classical contribution $\hat{\v V}t$, we see that the
position operator Equation \eqref{eq:Zx} presents, as in the usual Dirac
theory, a time-dependent component $\v Z^X(t)$ and a constant shift term
$\v Z^X(0)$. Since have
\begin{align}
    {}_k\!\braket{\mp| \hat{V}_j(\v{k}) |\pm}_k = 0,\qquad
    {}_k\!\braket{\pm| Z^X_j(\v{k}) |\pm}_k = 0,
\end{align}
the position operator $\v X(t)$ Equation \eqref{eq:Zx} mean value for the generic
state $\ket{\psi}=\ket{\psi_+}+\ket{\psi_-}$ having both particle and
positive and negative frequency components can always be written as
\begin{align}\label{eq:meanpos}
    \braket{\psi | \v X(t) | \psi} & =
        \v x_{\psi}^+(t) + \v x_{\psi}^-(t) + 
        \v x_{\psi}^{\rm{int}}(t)\\ \label{eq:classical-meanpos}
    \v x_{\psi}^{\pm}(t) & \coloneqq 
        \braket{\psi_\pm | \v X(0) + \hat{\v V}t | \psi_\pm}  \\ \label{eq:zitterbewegung}
    \v x_{\psi}^{\rm{int}}(t) & \coloneqq 
        2 \Re\sparen*{\braket{\psi_+ | \v X(0) - \v{Z}^{{X}}(0) + \v{Z}^{{X}}(t) | \psi_-}},
\end{align}
with $\Re$ denoting the real part. The first two terms
$\v x_{\psi}^{\pm}(t)$ simply correspond to the ``classical'' evolution
of the particle and antiparticle components of the initial state
$\ket{\psi}$, which evolve independently according to the classical
component $\hat{\v V}$ of the velocity operator. The interference between
positive and negative frequencies is responsible for the term
$\v x_{\psi}^{\rm{int}}(t)$ in Equation \eqref{eq:zitterbewegung}. Obviously,
in case of $\ket{\psi}$ having only positive or negative component,
the interference disappears. The additional term
$\v x_{\psi}^{\rm{int}}(t)$ consists of two {contributions:} 
 a constant shift
and a time dependent term. 

Taking for example a superposition of
particle and antiparticle states (see Definition
\ref{d:particle-state}),
\begin{equation}\label{eq:state-zitter}
\begin{aligned}
&c_+\ket{\psi_+}+c_-\ket{\psi_-},\qquad |c_+|^2 + |c_-|^2=1,\\
&\ket{\psi_\pm}=\frac{1}{(2\pi)^{d/2}} \int_{\mathsf{B}}\dif{\v{k}}\,
g_{{\v{k}'}}(\v{k},0) \ket{\v{k}}\ket{u_{\pm,p}(\v{k})}, \qquad
D_{\v{k}}\ket{u_{\pm,p}(\v{k})} = e^{\mp i\omega_{\v{k}}}\ket{u_{\pm,p}(\v{k})},
\end{aligned}
\end{equation}
where $\ket{u_{\pm,p}(\bk)}$ are the Dirac walk eigenvectors of Equation (\ref{eq:evecs-D}).
One can show that the time dependent contribution is an oscillating
term that for $t \to \infty$ goes to $0$ as $1/\sqrt{t}$, and whose
amplitude is bounded by $1/m$---say by the Compton wavelength
$\hbar/ m c$ in the usual dimensional units (see
Ref~\cite{Bisio:2013ab} for the proof in one space
dimension). Accordingly, $\v x_{\psi}^{\rm{int}}(t)$ can be considered as
the \QW analogue of the so-called \emph{Zitterbewegung}.

In Figures~\ref{fig:dirac-1d-zitter} and \ref{fig:zitter-3d}, we show two
numerical examples (see Section \ref{s:numerical}) of mean position
evolution for the Dirac \QW in one and three space dimensions,
respectively. In the first case, one can also notice the time-damping
of the jittering amplitudes.

\begin{figure}[H]
    \centering
    \begin{subfigure}[c]{0.35\textwidth}
        \includegraphics[width=\textwidth]{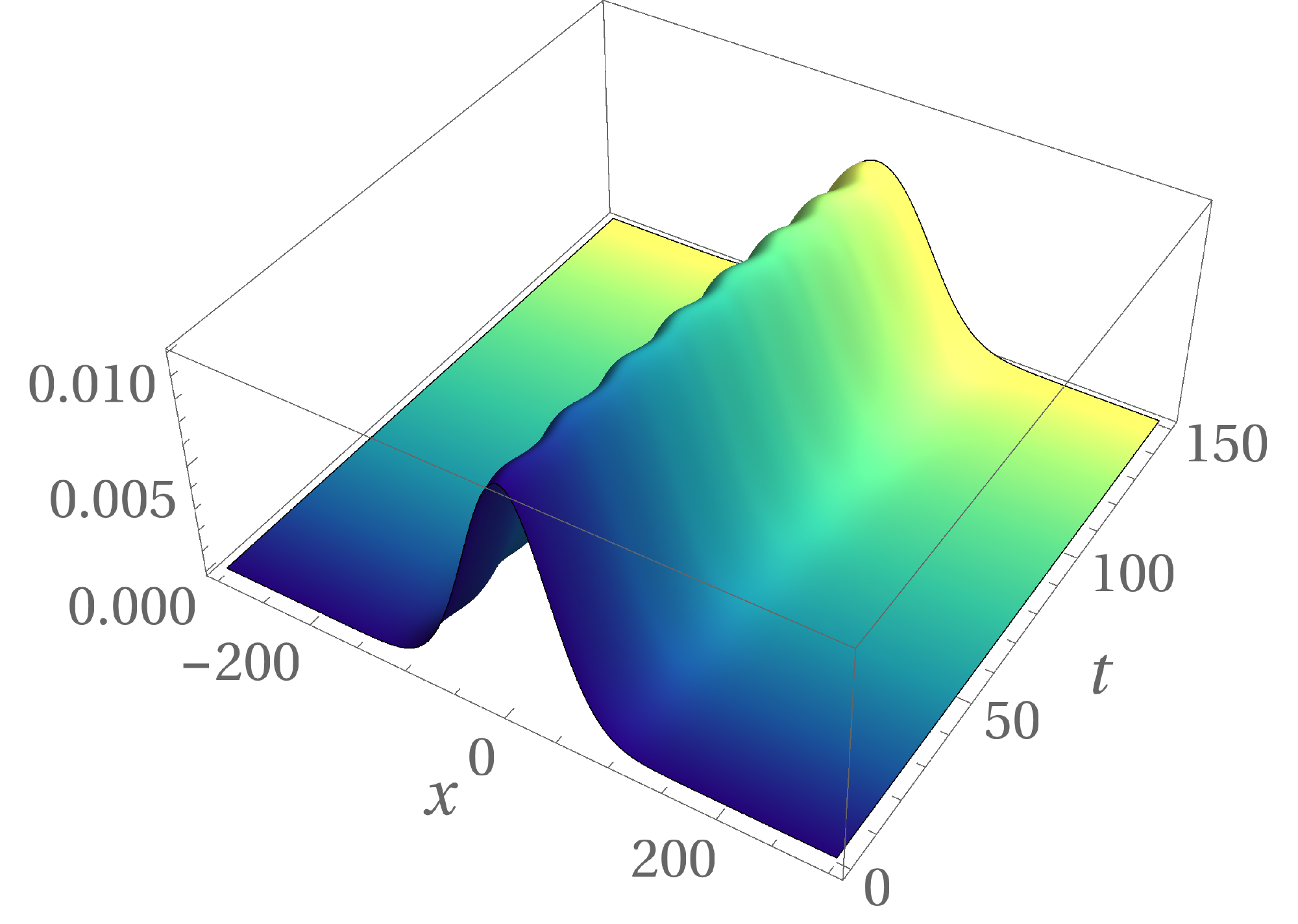}
    \end{subfigure}
    
    \begin{subfigure}[c]{0.35\textwidth}
        \includegraphics[width=\textwidth]{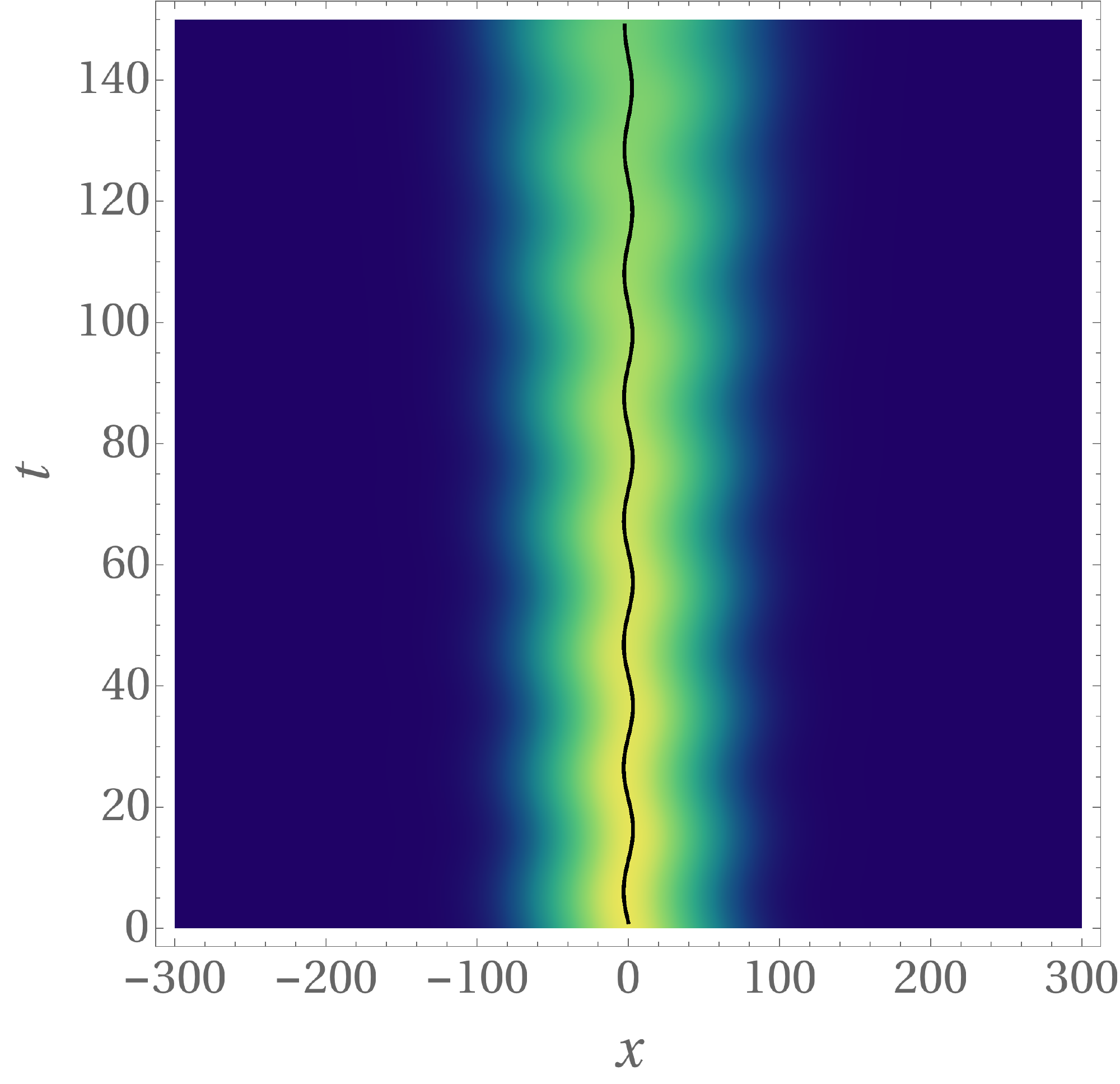}
    \end{subfigure}
    \qquad\quad
    \begin{subfigure}[c]{0.35\textwidth}
        \includegraphics[width=\textwidth]{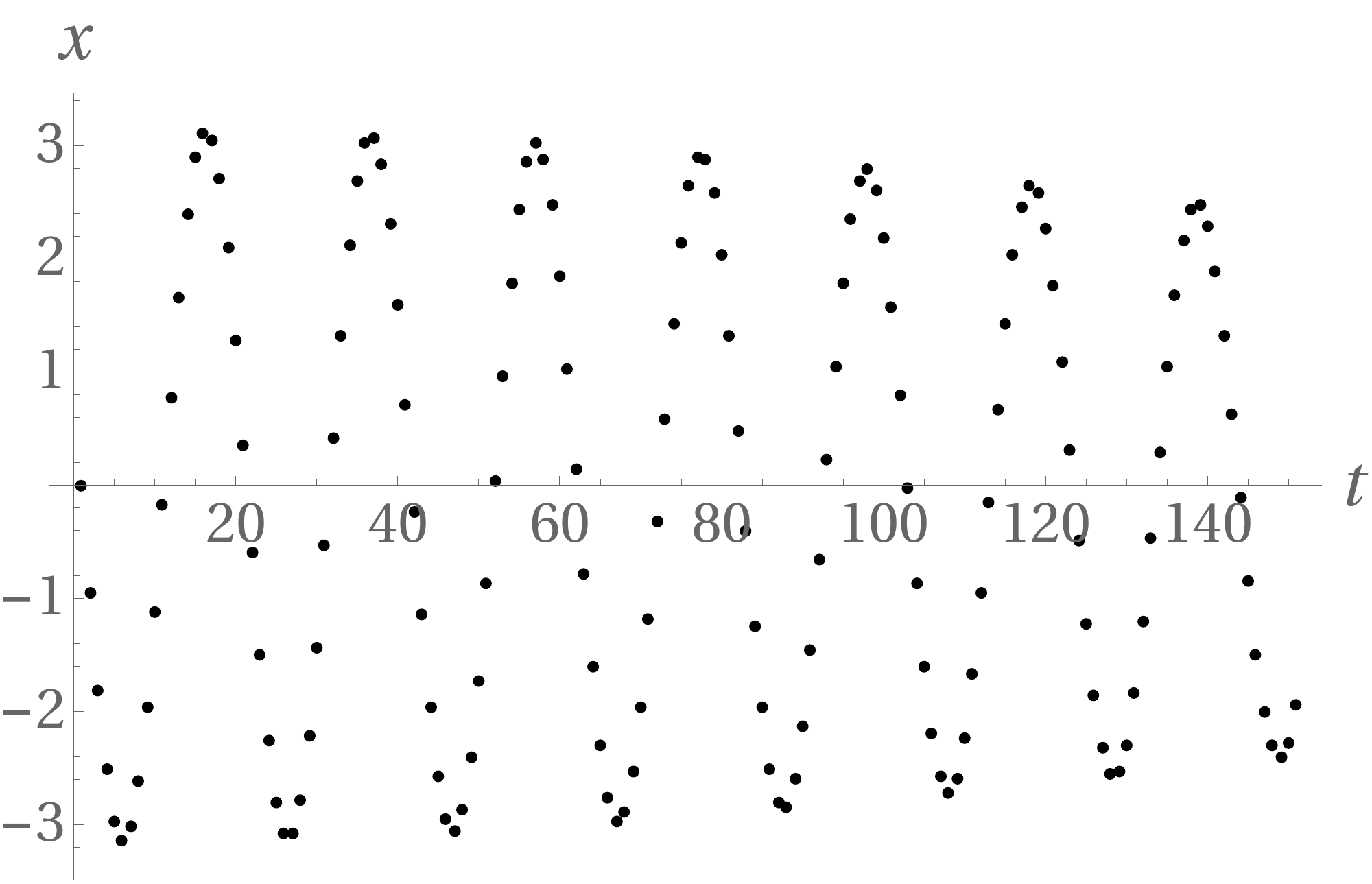}
    \end{subfigure}
    \caption{Evolution according to the Dirac \QW in $1+1$ dimensions for $t = 150$ time-steps of particle states having both a particle and an antiparticle component, as defined in Equation (\ref{eq:state-zitter}). Here the states are Gaussian with parameters: mass $m = 0.15$, width $\sigma = 40^{-1}$, mean wave-vector $k' = 0.01 \pi$, $c_+ = c_- = 1/\sqrt{2}$. Top and bottom-left: probability distribution of the position. Bottom-right: evolution of position mean value.}
    \label{fig:dirac-1d-zitter}
\end{figure}

\begin{figure}[H]
    \centering
    \includegraphics[width=0.4\textwidth]{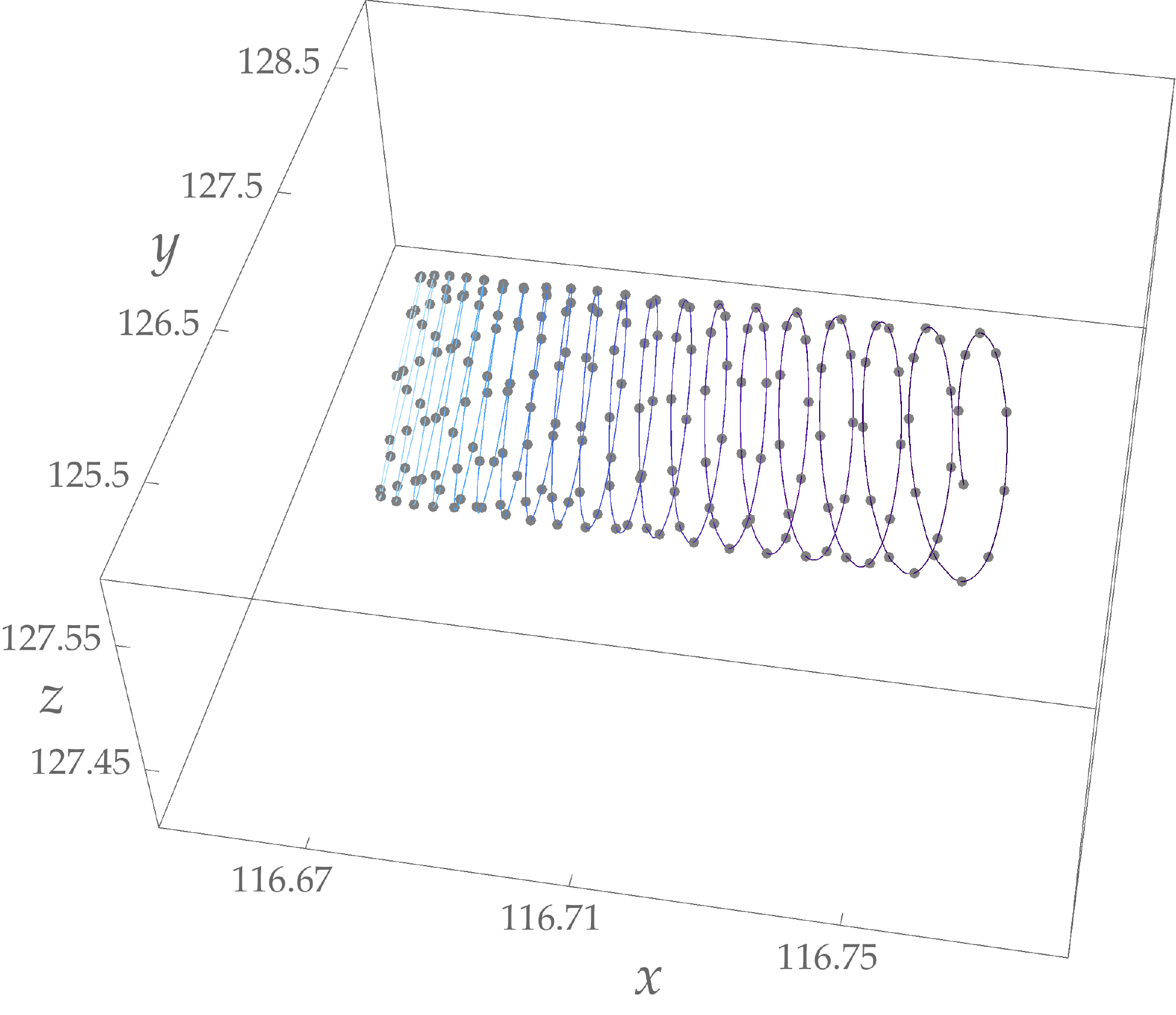}
    \caption{
        Evolution for the mean position according to the Dirac \QW in $3+1$ dimensions for $t = 200$ time-steps of particle states having both a particle and an antiparticle component, as defined in Equation (\ref{eq:state-zitter}). Here the states are Gaussian with parameters:
        mass $m = 0.3$, mean wave-vector $\vec k' = (0,0.01 \pi,0)$, width $\sigma_i = \sigma = 32^{-1}$ for $i = x,y,z$; the spinor components in the walk eigenbasis are $(1/\sqrt{2},0,1/\sqrt{2},0)$, with the first two components corresponding to the positive energy part and the second two to the negative one; time evolution from left to right.}\label{fig:zitter-3d}
\end{figure}

\begin{Remark}[Newton--Wigner position operator evolution]
As in QFT, one can define the {\em Newton--Wigner} position
    operator $\v X_{\textup{NW}}$ which does not mix states with positive and negative
    eigenvalues. Given the operator $W_{\textup{FW}}$ providing the {\em
      Foldy--Wouthuysen} representation of the Dirac walk, namely
    the representation in which the Hamiltonian $H(\v{k})$ is diagonal
\begin{align}\label{eq:fw}
  W_{\textup{FW}}=\int_{\mathsf{B}}\dif{\v{k}}\, \ketbra{\v{k}}{\v{k}}\otimes W_{\textup{FW}}(\v{k}),\qquad   W_{\textup{FW}}(\v{k})\colon\{\ket{\nu}\}\rightarrow \{\ket{u}_{\v{k}}\},\\
  W^{-1}_{\textup{FW}}(\v{k})H(\v{k})W_{\textup{FW}}(\v{k})=\mathrm{diag}(\omega_{\v{k}},\omega_{\v{k}},-\omega_{\v{k}},-\omega_{\v{k}}),
\end{align}
the Newton--Wigner rotated position operator is defined as
\begin{align}\label{eq:NW}
  \v X_{\textup{NW}}=W^{-1}_{\textup{FW}} \v X W_{\textup{FW}}.
\end{align}

As in the usual \QFT, the Newton--Wigner position operator Equation \eqref{eq:NW}
does not suffer the jittering of the mean position even for states
having both a particle and an antiparticle component. Indeed, in this
case, the velocity operator
    \begin{align}\label{eq:NW-Vt}
      \v V_{\textup{NW}}(t)=i[H,\v X_{\textup{NW}}(t)],\qquad \v V(k)=\hat{\v V}(\v{k}),
    \end{align} 
    corresponds to the classical component of the
    velocity operator in Equation~\eqref{eq:Vt} and leads to a null
    acceleration $\v A(t)=i[H,\v V_{\textup{NW}}(t)]=0$. By integrating
    Equation \eqref{eq:NW-Vt}, we see that the time evolution of the
    Newton--Wigner position operator $\v X_{\textup{NW}}(t)$ is simply
\begin{align}
  \v X_{\textup{NW}}(t)=\v X_{\textup{NW}}(0)+\hat{\v V}t.
\end{align}
\end{Remark}

\section{Conclusions}

The QW framework, say a lattice of quantum systems in local unitary
interaction, appears to be very promising both from the
information-theoretical perspective, in that QWs can be exploited to
solve efficiently some search problems, and for the connection
existing between the a discrete time quantum walk evolution and the
relativistic equations of motion. In this paper, we analyse both
numerically and analytically the properties of QWs on Abelian lattices
up to $3+1$ dimensions.  The Weyl QWs considered  here are the only
isotropic (all the directions on the lattice are equivalent) QWs
admissible on Abelian lattices and with two-dimensional coin
system. The QWs in one and two space dimensions are defined on the
simple cubic lattice while the QW in $3+1$ dimensions is defined on
the body-centered cubic lattice. As shown in
Ref.~\cite{DAriano:2014ae}, any other topology fails to accommodate a
non-trivial QW (by trivial QW we mean a walk corresponding to the
identical evolution or to a shift in a fixed direction). The only
coupling of two Weyl QWs that preserves locality is then defined Dirac
QW.  Remarkably, the selected walks are compatible with a ``large
scale'' relativistic dynamics.

The analytical results of this paper show that for particle states as
defined in \cref{d:particle-state}, the Weyl and Dirac QW dynamics is
well approximated by a dispersive differential equation whose drift
and diffusion coefficients reduce to the usual Weyl and Dirac ones in
the limit of small wave-vectors.  The numerical results are the first
simulations of QWs in $3+1$ dimensions and on the BCC lattice.  The
numerical results are given for the Dirac QW in $1+1$ and $3+1$
dimensions for different types of initial states. In $3+1$ dimensions
we show the evolution of both particle states and perfectly localised
states.  In $1+1$ dimensions, the evolution of the superposition of
positive and negative energy states for the Dirac QW produces (as
depicted in Figure \ref{fig:dirac-1d-zitter}) the well-known
\emph{Zitterbewegung} effect of the relativistic electron.  The
appearance of this oscillating phenomenon is also shown for the Dirac
QW in $3+1$ dimensions (see Figure \ref{fig:zitter-3d}).

As already mentioned, the QW framework can accommodate {from} 
a
theoretical viewpoint a local discrete time unitary evolution as the
{microscopic} 
 description of relativistic {particle} 
  dynamics. The last
one is obtained as an approximation of the QW evolution for a specific
class of quantum states; namely, states narrow-banded in small
wave-vectors (see Definition \ref{d:particle-state}). The same QW on
arbitrary states (for example, localized states) shows a very different
dynamical behaviour that cannot be interpreted as a particle
evolution. While the approximation of Proposition
\ref{p:semi-classical-states} only works for narrow-banded states, the
numerical analysis presented in the manuscript applies to arbitrary
states.

Our results agree with other works (see for example
Ref.~\cite{Strauch:2006aa}) in one space dimensions that studied the
continuum limit of QWs, namely the lattice spacings and the time steps
are sent to 0, in comparison with the Dirac or the Klein–Gordon
equations. Here we do not take the same continuum limit but show that
for specific input states the QW evolution recovers the relativistic
one. Moreover, we do not consider only one space dimension, but also
the two and three space dimensional case where the notion of spin
becomes relevant.

Discrete time QWs provide a local and unitary evolution underlying the
relativistic dynamics and do not start from a finite difference
counterpart of the relativistic differential equations (or
Hamiltonians). The main difference is in the notion of locality, since
the locality of the Hamiltonian does not correspond to the locality of
the unitary operator and {\emph{vice versa}}. As a consequence the “effective”
Hamiltonian corresponding to the Weyl (Dirac) QW differs from the
usual Weyl (Dirac) finite difference Hamiltonian (see the
$\mathrm{sinc}$ function that appears in
Equations \eqref{eq:weyl-interpolating} and
\eqref{eq:dirac-interpolating}). In the limit of small wave-vectors,
the two Hamiltonians coincide, and both give the usual relativistic
dynamics. However, for large wave-vectors they differ significantly.

The Weyl and Dirac QWs presented in this paper also provide an
alternative way to discretize the usual Weyl and Dirac dynamics. The
numerical results of this manuscript can be compared with other
numerical approaches in the literature; see for example
Ref.~\cite{Bauke20112454,thaller2004visualizing,PhysRevA.59.604,Mocken2008868,FillionGourdeau20121403},
where the authors adopt split-operator schemes to approximate the
solutions of the Weyl and Dirac differential equations and recover the
usual relativistic dynamics in the continuum limit.

\vspace{6pt}  


\acknowledgments{\textbf{Acknowledgments:} This work has been supported in part by the Templeton Foundation
    under the project ID\# 43796 A Quantum-Digital Universe.}
    \vspace{6pt}


\authorcontributions{\textbf{Author Contributions:}
Giacomo Mauro D'Ariano conceived the project. Nicola Mosco implemented all the numerical simulations writing original codes in C++ and in Wolfram Mathematica. Paolo Perinotti and Alessandro Tosini provided the analytical analysis. Correspondence and requests for materials should be addressed to all the authors. All authors have read and approved the final manuscript.}

\conflictofinterests{\textbf{Conflicts of Interests:} {The authors declare no conflict of~interest}.

} 

\bibliographystyle{mdpi}
\renewcommand\bibname{References}


\end{document}